\documentclass[runningheads]{llncs}
\usepackage[T1]{fontenc}
\usepackage{graphicx}
\usepackage{cite}
\usepackage{amsmath,amssymb,amsfonts}
\usepackage{booktabs}
\usepackage{tabularx}
\usepackage[colorlinks=true,
            linkcolor={red!70!black},
            citecolor={green!70!black},
            urlcolor={blue!80!black}]{hyperref}
\usepackage[all]{hypcap}
\usepackage{tikz}
\usetikzlibrary{arrows.meta,positioning,fit,shapes.misc,calc,backgrounds,decorations.pathreplacing}
\spnewtheorem{assumption}{Assumption}{\bfseries}{\itshape}

\usepackage{pgfplots}
\pgfplotsset{compat=1.18}
\usepackage{xcolor}
\usepackage{orcidlink}

\definecolor{sciBlue}{HTML}{0077BB}
\definecolor{sciOrange}{HTML}{EE7733}
\definecolor{sciTeal}{HTML}{009988}
\definecolor{sciRed}{HTML}{CC3311}
\definecolor{sciCyan}{HTML}{33BBEE}
\definecolor{sciMagenta}{HTML}{EE3377}
\definecolor{sciPurple}{HTML}{AA4499}
\definecolor{sciGray}{HTML}{555555}

\pgfplotsset{
    every axis/.append style={
        tick label style={font=\small},
        label style={font=\small},
        legend style={font=\small, fill=white, fill opacity=0.9, draw opacity=1, text opacity=1},
        grid style={densely dotted, color=gray!90},
        thick,
        enlargelimits=0.05,
        xtick={1, 2, 3, 4, 5, 6, 7, 8, 9, 10},
    }
}

\newcommand{\Rq}{\mathcal{R}_q}
\newcommand{\NTT}{\mathsf{NTT}}
\newcommand{\sn}{\mathsf{sn}}
\newcommand{\snstar}{\mathsf{sn}^{\ast}}
\newcommand{\HtoC}{\mathsf{H2C}}
\newcommand{\pack}{\mathsf{pack}}
\newcommand{\CoinKeyGen}{\textsf{CoinKeyGen}}
\newcommand{\SerNum}{\textsf{SN}}
\newcommand{\Sig}{\textsf{Sign}}
\newcommand{\Ver}{\textsf{Verify}}
\newcommand{\Link}{\textsf{Link}}

\title{Obscura-PQ: Post-Quantum Privacy-Preserving Protocol for the Algorand Blockchain Using Lattice-Based Linkable Ring Signatures}
\titlerunning{Obscura-PQ: Post-Quantum Privacy for Algorand}
\author{Navid Azimi \orcidlink{0009-0009-1980-8521}}
\authorrunning{N. Azimi}
\institute{Emory University, Atlanta, GA 30322, USA  \\
\email{navid.azimi@emory.edu}
}
\begin{document}
\maketitle
\begin{abstract}
Public blockchains expose the complete transaction graph, and the privacy protocols deployed to obscure it rely almost exclusively on elliptic-curve cryptography, whose discrete-logarithm foundations fall to Shor's algorithm. Because ledgers are immutable, every anonymity set published today under classical assumptions can be retroactively deanonymized by a future quantum adversary. Transitioning to post-quantum alternatives remains challenging, as the strict resource limits of smart contracts prohibit the native on-chain verification of computationally intensive post-quantum proofs. To address these challenges, we present \emph{Obscura-PQ}, a decentralized, non-custodial post-quantum privacy protocol that verifies natively on the Algorand blockchain. Its core is a setup-free lattice linkable ring signature over the cyclotomic ring $\mathcal{R}_q = \mathbb{Z}_q[X]/(X^{512}+1)$. A deposit is a Ring-SIS binding commitment to a short secret; a withdrawal proves knowledge of a ring opening via an AOS/Borromean-style challenge chain over two response-sharing linear relations with rejection-sampled short responses, while publishing a deterministic Ring-LWE serial number for double-spend detection. We reduce double-spend soundness and linkability to the Ring-SIS assumption, theft resistance to Ring-SIS for rings of honestly generated deposits, and anonymity to Ring-LWE and an explicit decisional linking assumption, all in the classical random-oracle model. To overcome strict on-chain opcode and storage limits, Obscura-PQ evaluates all verification relations entirely in the number-theoretic-transform (NTT) domain. We split the forward NTTs across opcode-pooled execution phases and stream oversized proofs through refundable box storage, enabling $O(1)$ membership and double-spend checks. We provide a complete Algorand testnet implementation, demonstrating native on-chain verification of a post-quantum privacy protocol under strict smart-contract limits.
\keywords{Post-Quantum Cryptography \and Lattice-Based Cryptography \and Linkable Ring Signatures \and Blockchain Privacy \and Algorand \and On-chain Verification \and Smart Contracts}
\end{abstract}
%
\section{Introduction}
\label{sec:intro}
The transparency of public blockchains \cite{nakamoto2008bitcoin} underpins their security and auditability, yet it inherently undermines user privacy: adversaries can trace the flow of funds, cluster addresses, and deanonymize users through transaction-graph analysis \cite{ron2013quantitative,meiklejohn2013fistful,androulaki2013evaluating}. Privacy-preserving protocols and decentralized mixers address this issue \cite{miers2013zerocoin,sasson2014zerocash,bonneau2014mixcoin}: on Ethereum \cite{wood2014ethereum}, Tornado Cash \cite{pertsev2019tornado} severs the on-chain link between depositors and withdrawers using zk-SNARKs over a Merkle accumulator \cite{mendonca2026efficient,merkle1987digital}; M\"obius \cite{meiklejohn2018mobius} achieves trustless tumbling from linkable ring signatures; Zether \cite{bunz2020zether} provides confidential account balances; Obscura \cite{azimi2026obscura} deploys a non-custodial LSAG-based privacy protocol natively on Algorand; and Monero's CryptoNote lineage \cite{van2013cryptonote,noether2016ring,goodell2019concise} deploys linkable ring signatures at the base layer.

All of these systems share a single point of failure: their unforgeability or soundness, and the long-term secrecy of their published commitments, ultimately rest on discrete-logarithm or pairing assumptions in elliptic-curve groups, all of which Shor's algorithm breaks \cite{shor1994algorithms}. The threat is not merely the hypothetical future breakage of future transactions. Because a blockchain is an immutable public record, every commitment, ring, and nullifier published under classical assumptions is archived permanently; a quantum adversary arriving decades from now can retroactively recover the secret behind each deposit, link deposits to withdrawals, and dissolve every anonymity set ever formed. Unlike signatures, which can be rotated, privacy protocols face \emph{harvest-now, deanonymize-later} as the default attack, and this can be prevented by making the published transcript post-quantum secure from the outset.

A second, orthogonal obstacle is the execution environment. High-throughput blockchains such as Algorand \cite{gilad2017algorand} preserve fast finality and predictable fees by imposing strict per-transaction opcode budgets, flat state models, and small argument-size limits on their virtual machines \cite{algorand_avm,algorand_sc_costs_constraints}. Lattice-based zero-knowledge machinery is far heavier than its elliptic-curve counterpart: a single polynomial multiplication in a post-quantum ring already exceeds the entire pooled execution budget of a transaction group, and a single lattice commitment is more than an order of magnitude larger than an elliptic-curve point. No deployed smart-contract privacy protocol that we are aware of verifies a post-quantum anonymous spend natively on-chain.

\paragraph{Proposed Approach.}
We present \emph{Obscura-PQ}, the post-quantum version of the Obscura protocol \cite{azimi2026obscura}: a decentralized, non-custodial privacy protocol for Algorand whose entire cryptographic core is post-quantum. Obscura-PQ is a fixed-denomination privacy pool: a user deposits by publishing a \emph{coin}, and later withdraws to a fresh address by proving, without revealing which one, that they own one of the coins in a self-selected \emph{ring} of on-chain deposits, while publishing a \emph{key image} that makes double-spends detectable.

The construction works over the power-of-two cyclotomic ring $\Rq = \mathbb{Z}_q[X]/(X^{n}+1)$ with $n = 512$ and $q = 12289$ (the ring underlying Falcon-512 \cite{fouque2018falcon}) and follows the commitment-and-serial-number paradigm of Lattice RingCT and MatRiCT \cite{alberto2018post,esgin2019matrict}. A coin is a Ring-SIS binding commitment $C = a_1 k + a_2 s + a_4 e$ (a rank-one module commitment in the style of \cite{baum2018more}) to a unique short spend secret $k$, short randomness $s$, and a short serial-number blinding term $e$; the key image is the deterministic \emph{Ring-LWE serial number} $\sn = a_3 k + e$, revealed in full at withdrawal; and the one-out-of-$r$ spend proof is an AOS/Borromean-style challenge chain \cite{abe20041,Maxwell2015BorromeanRS} over \emph{two response-sharing linear relations}, one opening the commitment, one producing the serial number, closed at the true index with rejection-sampled short responses \cite{lyubashevsky2009fiat,lyubashevsky2012lattice}. The two-relation response sharing that binds the serial number to the coin serves as the lattice analogue to the parallel-equation structure of CLSAG \cite{goodell2019concise}. While the compressed response-sharing closure of DualRing \cite{yuen2021dualring} shares a similar intuition, it remains structurally distinct: DualRing employs a ring of challenges with a single response, whereas our construction relies on a sequential challenge chain. The public parameters $a_1, a_2, a_3, a_4$ are nothing-up-my-sleeve hash outputs, so the protocol has no trusted setup. Soundness and linkability reduce to Ring-SIS \cite{lyubashevsky2006generalized,alberto2018post}, and anonymity to Ring-LWE \cite{lyubashevsky2010ideal}, in the random-oracle model.

\paragraph{Key Challenges and Technical Solutions.}
Two problems separate this design from a routine instantiation.

First, a \emph{sound post-quantum key image}. A key image supports double-spend detection only if it satisfies three properties at once: it must be \emph{deterministic}, so that every spend of a given coin yields the same image; it must be derived from a \emph{unique} secret, so that a malicious signer cannot equip one coin with several valid images; and it must be \emph{member-hiding}, so that publishing it reveals nothing about which ring member it corresponds to. Obscura-PQ derives all three properties from commitment binding. The uniqueness of the spend secret $k$ and the blinding term $e$ is guaranteed by the Ring-SIS assumption, as computing a second short opening for the commitment $C$ would yield a Ring-SIS collision. This unique opening deterministically fixes the serial number $\sn = a_3 k + e$. Revealing $\sn$ in the clear compromises neither the secret nor the anonymity of the transaction: because $\sn$ forms a valid Ring-LWE sample (where $k$ acts as the secret and $e = H_{\mathrm{short}}(k)$ as the error), it is pseudorandom in the random-oracle model and computationally unlinkable to $C$. Furthermore, while $e$ is derived deterministically to allow honest signers to remain stateless, the \emph{soundness} of the protocol relies strictly on $e$ being algebraically bound to $C$ via the term $a_4 e$, rather than on the hash derivation. Consequently, a malicious signer who commits using an arbitrary short $e$ is inextricably pinned to that same $e$ for all future spending attempts. Finally, because the signing protocol shares response coordinates across both verification relations, every verifying signature guarantees that one short opening simultaneously opens some ring commitment and yields the published serial number.

Second, \emph{native on-chain lattice verification}. Verifying the signature requires, for each ring member, four forward number-theoretic transforms (NTTs) of degree-512 polynomials, a pointwise relation evaluation, norm checks on the responses, and a hash-chain update. This workload collides with the AVM's resource model on two fronts: a single forward NTT with its norm check costs more than the entire pooled budget of an Algorand transaction group, and the proof, at $32 + 3072r$ bytes for ring size $r$, exceeds the 2\,048-byte application-argument limit for any $r$. Obscura-PQ overcomes both obstacles with four mutually supporting techniques. (i) All relations are evaluated \emph{in the NTT domain}: commitments are stored, and serial numbers transported, as forward transforms, so the verifier never computes an inverse NTT. (ii) Each forward NTT is \emph{split} at a fixed butterfly length into two execution phases, nine phases per ring member in total, each provisioned with pooled opcode budget through inner ``no-op'' application calls. (iii) Proofs are \emph{streamed} into an application-owned transport box before verification begins; all transient storage deposits are fronted by the withdrawer and refunded atomically at settlement. (iv) Commitments and nullifiers reside in a flat key--value \emph{box-storage} model \cite{algorand_box_storage}, which supports $O(1)$ membership and double-spend checks without a global Merkle accumulator.

\paragraph{Contributions.}
This paper makes the following contributions:
\begin{itemize}
    \item \textbf{Protocol design.} We propose Obscura-PQ, a post-quantum privacy protocol whose anonymous spend is a lattice linkable ring signature built from a Ring-SIS binding commitment, a deterministic linear serial number, and an AOS/Borromean-style challenge chain over two response-sharing relations. Every Fiat--Shamir challenge is computed over the full settlement context (recipient, relayer, fee, and application identifier), defeating replay and fee/relayer front-running, and the public parameters are nothing-up-my-sleeve.
    \item \textbf{On-chain execution model for lattice verification.} We show how to run the verifier natively on a strictly budgeted virtual machine: a frequency-domain wire format that eliminates inverse transforms, table-driven negacyclic NTTs split across opcode-pooled phases, box-streamed proof transport with refundable storage accounting, and a per-member phase machine whose state survives across transaction groups. The resulting verifier is correct for every ring size up to the protocol cap.
    \item \textbf{Security analysis.} We give a precise threat model and reductions: theft resistance (unforgeability) and double-spend soundness to Ring-SIS, signer ambiguity and unlinkability to Ring-LWE and an explicit decisional linking assumption for the serial number closed with the rejection-sampling zero-knowledge argument, and replay resistance to context binding, all in the random-oracle model. Because the fully splitting Falcon-512 ring does not admit invertible short challenge differences, our extractor recovers a \emph{relaxed} opening and our unforgeability statement is proved for rings of honestly generated deposits (see Appendix~\ref{app:formal}, Remark~\ref{rem:gaps}).
    \item \textbf{End-to-end implementation.} We provide a complete, open-source full-stack deployment on the Algorand testnet, featuring a PyTeal smart contract compiled to AVM bytecode, a local Python-based prover, and a React client interface.
\end{itemize}

\medskip
\noindent
Figure~\ref{fig:arch} illustrates the end-to-end architecture. The remainder of the paper is organized as follows: Section~\ref{sec:related} surveys related work, and Section~\ref{sec:prelim} establishes the necessary cryptographic and system preliminaries. We formalize the core construction in Section~\ref{sec:crypto} and detail the protocol flows in Section~\ref{sec:protocol}. The system design is then presented in two parts: Section~\ref{sec:onchain} describes the on-chain execution model, while Section~\ref{sec:impl} covers the implementation and its validation. Finally, Section~\ref{sec:sec} analyzes security and privacy, Section~\ref{sec:eval} develops the cost model, and Section~\ref{sec:concl} concludes. Formal definitions and security arguments are deferred to Appendix~\ref{app:formal}.

\begin{figure}[t]
  \centering
  \includegraphics[width=\textwidth]{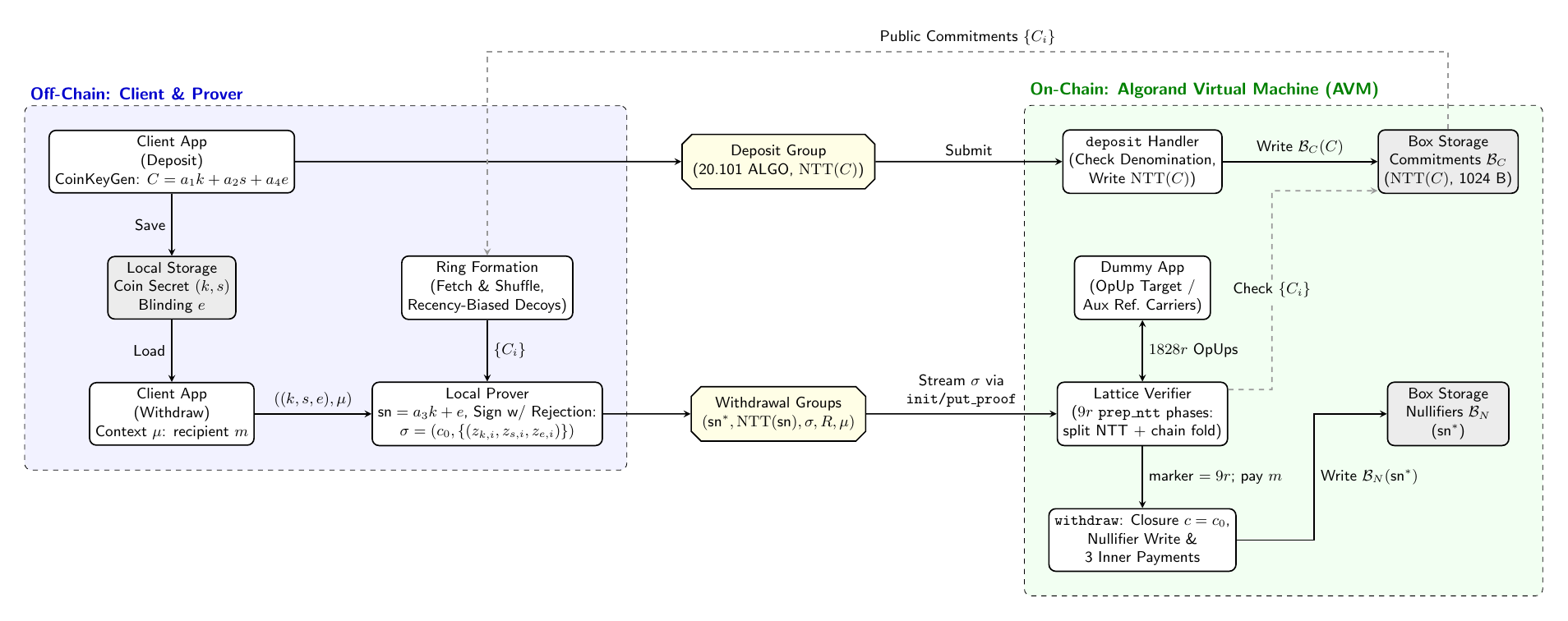}
    \caption{Obscura-PQ protocol end-to-end architecture. Solid arrows denote protocol actions and transaction submissions; dashed arrows denote reads of on-chain public state. The Client App initiates both phases. \textbf{Deposit:} the client runs CoinKeyGen, sampling a short coin secret $(k, s)$ with deterministic blinding $e = H_{\mathrm{short}}(k)$, computes the Ring-SIS commitment $C = a_1 k + a_2 s + a_4 e$, stores $(k, s, e)$ locally, and submits a deposit group $(20.101~\mathrm{ALGO}, \mathrm{NTT}(C))$. The on-chain \texttt{deposit} handler checks the fixed denomination and writes $\mathrm{NTT}(C)$ into commitment Box Storage $\mathcal{B}_C$. \textbf{Withdrawal:} the client loads $(k, s, e)$, specifies context $\mu$ (including recipient $m$), and fetches public commitments $\{C_i\}$ from $\mathcal{B}_C$ to form a recency-biased, shuffled ring $R$. The local prover then computes the serial number $\mathsf{sn} = a_3 k + e$ and, with rejection sampling, a lattice AOS/Borromean-style linkable ring signature $\sigma = (c_0, \{(z_{k,i}, z_{s,i}, z_{e,i})\})$; the digest $\mathsf{sn}^\ast$ is the nullifier. Because $\sigma$ exceeds application-argument limits, the client streams it on-chain via \texttt{init}/\texttt{put\_proof}. The lattice verifier reads $\{C_i\}$ from $\mathcal{B}_C$, expands its opcode budget via $1828r$ inner \texttt{opup} calls to a dummy application, and verifies $\sigma$ natively through $9r$ sequential \texttt{prep\_ntt} phases (split NTT and Fiat--Shamir chain fold). After asserting chain closure $c = c_0$ (marker $= 9r$), the \texttt{withdraw} handler records $\mathsf{sn}^\ast$ in nullifier Box Storage $\mathcal{B}_N$ if unspent and settles with three inner payments to $m$.}
\label{fig:arch}
\end{figure}

\section{Related Work}
\label{sec:related}
\paragraph{Smart-Contract Privacy Pools.}
Tornado Cash \cite{pertsev2019tornado} pioneered SNARK-based privacy pools on Ethereum, achieving pool-wide anonymity via Groth16 proofs \cite{groth2016size} over a Merkle accumulator at the cost of a trusted setup and pairing-heavy verification; Zerocoin and Zerocash \cite{miers2013zerocoin,sasson2014zerocash} established the commitment/nullifier paradigm at the base layer. M\"obius \cite{meiklejohn2018mobius} replaces SNARKs with linkable ring signatures for trustless tumbling, and Zether \cite{bunz2020zether} provides confidential balances; both are elliptic-curve-based. Obscura \cite{azimi2026obscura} brought the ring-signature privacy pool paradigm to Algorand, introducing the flat box-storage state model with $O(1)$ membership checks and the inner-transaction opcode-pooling on which the present work builds; its LSAG core, like all of the above, rests on discrete-logarithm assumptions. Obscura-PQ is its post-quantum evolution: it keeps the setup-free, accumulator-free ring paradigm and the box-based state architecture, but makes the entire published transcript post-quantum, which, as Sections~\ref{sec:onchain} and~\ref{sec:eval} show, changes the execution problem qualitatively, since lattice verification no longer fits within any single transaction group.

\paragraph{Linkable Ring Signatures.}
The key-image mechanism originates with LSAG \cite{liu2004linkable} and CryptoNote \cite{van2013cryptonote}, refined by MLSAG \cite{noether2016ring} and CLSAG \cite{goodell2019concise} in Monero. Obscura-PQ's two response-sharing relations are the lattice analogue of CLSAG's parallel equations. Among post-quantum constructions, Raptor \cite{lu2019raptor} builds linkable ring signatures from NTRU trapdoors; Lattice RingCT \cite{alberto2018post}, the one-out-of-many framework of Esgin et al. \cite{esgin2019lattice}, and MatRiCT \cite{esgin2019matrict} develop the commitment-and-serial-number approach that Obscura-PQ instantiates; DualRing \cite{yuen2021dualring} gives a compressible challenge-ring, single-response closure with a lattice instantiation; and Calamari and Falafl \cite{beullens2020calamari} achieve logarithmic-size (linkable) ring signatures from isogenies and lattices respectively, with verifiers that are, however, far beyond current on-chain budgets. To our knowledge, none of these schemes has previously been deployed natively inside a deployed smart contract.

\paragraph{On-Chain Verification Under Budget Constraints.}
While empirical deanonymization studies of deployed privacy coins and protocols \cite{moser2018empirical,kumar2017traceability,beres2021blockchain} motivate the client-side decoy discipline detailed in Section~\ref{sec:protocol}, the primary bottleneck for smart-contract privacy remains on-chain verification. Existing zero-knowledge proof systems struggle in this environment: post-quantum succinct systems like STARKs \cite{ben2018scalable} feature transparent setups and sublinear verification, yet their proof sizes and verifier costs are completely impractical under strict per-group opcode limits. Conversely, highly efficient inner-product arguments like Bulletproofs \cite{bunz2018bulletproofs} rely on classical discrete-logarithm assumptions. Obscura-PQ bridges this gap. By leveraging a split-transform phase machine and box-persisted state, it provides, to our knowledge, the first demonstration that complex structured-lattice verification can be systematically decomposed to run natively within highly constrained smart-contract execution environments.

\section{Preliminaries}
\label{sec:prelim}
This section formalizes the algebraic setting, the hardness assumptions, the signature paradigm, and the blockchain execution model that underpin Obscura-PQ.

\subsection{Cryptographic Foundations}

\paragraph{The Ring $\Rq$ and the Negacyclic NTT.}
All arithmetic takes place in the cyclotomic quotient ring
\[
\Rq \;=\; \mathbb{Z}_q[X]/(X^{n}+1), \qquad n = 512,\quad q = 12289,
\]
the ring underlying the Falcon signature scheme \cite{fouque2018falcon}, which NIST selected for post-quantum standardization. Since $2n \mid q-1$, the modulus splits completely and $\Rq$ supports a full negacyclic number-theoretic transform (NTT): a bijective evaluation map $\NTT : \Rq \to \mathbb{Z}_q^{n}$ under which ring multiplication becomes pointwise multiplication, computable with $\tfrac{n}{2}\log_2 n = 2304$ butterfly operations \cite{longa2016speeding}. We write $\hat{x} = \NTT(x)$ and $\odot$ for pointwise multiplication modulo $q$. For $f \in \Rq$ we write $\lVert f \rVert^2$ for the squared Euclidean norm of its integer coefficient vector; \emph{short} elements are those with small norm.

\paragraph{Hardness Assumptions.}
Obscura-PQ relies on two standard structured-lattice assumptions and the random-oracle heuristic:
\begin{itemize}
    \item \textbf{B1 (Ring-SIS)} \cite{lyubashevsky2006generalized,peikert2007lattices}: Given uniformly random $a_1, \dots, a_m \in \Rq$, it is computationally infeasible to find short polynomials $f_1, \dots, f_m$, not all zero, such that $\sum_i a_i f_i = 0$. In Obscura-PQ, the Ring-SIS assumption guarantees three critical properties: \begin{itemize} \item \emph{Commitment Binding:} The coin commitment $C = a_1 k + a_2 s + a_4 e$ is a rank-one module instance over $(a_1, a_2, a_4)$. Opening $C$ to two distinct short triples yields a direct Ring-SIS collision. \item \emph{Unforgeability:} An adversary cannot forge a valid spend proof for an uncorrupted coin without finding a short solution to the underlying linear relations. \item \emph{Serial-Number Soundness:} Because the opening $(k, e)$ is uniquely pinned by commitment binding, the deterministic serial number $\sn = a_3 k + e$ is strictly fixed per coin. A malicious signer cannot equip a single coin with multiple valid serial numbers without computing a Ring-SIS collision. \end{itemize}
    
    \item \textbf{B2 (Ring-LWE)} \cite{lyubashevsky2010ideal}: For a public uniform $a \in \Rq$ and short $x, y$, the sample $(a, a x + y)$ is computationally indistinguishable from uniform. In Obscura-PQ, this assumption provides the \emph{hiding} properties necessary for anonymity and unlinkability, applying in two distinct contexts: \begin{itemize} \item \emph{Commitment Hiding:} The coin commitment $C = a_1 k + a_2 s + a_4 e$ acts as a Ring-LWE sample), making it pseudorandom and computationally hiding its opening. \item \emph{Serial-Number Pseudorandomness:} The serial number $\sn = a_3 k + e$ takes the exact form of a Ring-LWE sample with secret $k$ and short error $e$. Because $e = H_{\mathrm{short}}(k)$ is derived deterministically, $\sn$ is formally a \emph{self-correlated-noise} Ring-LWE instance. By modeling $H_{\mathrm{short}}$ as a random oracle with short outputs, $e$ acts as an independent short error to any adversary that has not queried the oracle at $k$. Consequently, recovering $k$ from $\sn$ reduces to \emph{search} Ring-LWE, and linking a deposit to a withdrawal reduces to \emph{decisional} Ring-LWE in the random-oracle model (see Appendix~\ref{app:formal}, Assumption~\ref{ass:lwe} and Remark~\ref{rem:gaps}). \end{itemize}

    \item \textbf{B3 (Random-Oracle Model)} \cite{bellare1993random}: The hash function instantiating the Fiat--Shamir transform \cite{fiat1986prove} is modeled as a random oracle. While the security reductions presented in this paper operate within the classical random-oracle model (ROM), the intended long-term deployment model requires security in the quantum-accessible ROM (QROM) \cite{boneh2011random,don2019security}. A formal QROM analysis of the composed scheme remains an important direction for future work (see Section~\ref{sec:sec}).
\end{itemize}
Both lattice assumptions are quantum-hard, no quantum algorithm better than generic lattice reduction is known, and they underlie NIST-selected post-quantum signatures, including the standardized ML-DSA (CRYSTALS-Dilithium) and Falcon \cite{ducas2018crystals,fouque2018falcon}. In contrast to SNARK-based pools, no trusted setup is required anywhere in the protocol.

\paragraph{Fiat--Shamir with Aborts.}
In lattice-based sigma protocols, the prover's response has the form $z = y + \chi \cdot w$, where $y$ is a short random mask, $\chi$ a sparse challenge, and $w$ the secret. Published naively, the distribution of $z$ leaks $w$. Lyubashevsky's \emph{rejection sampling} technique \cite{lyubashevsky2009fiat,lyubashevsky2012lattice} restarts the signing attempt unless $z$ falls within an acceptance region, making the accepted distribution (nearly) independent of the secret; verifiers additionally enforce a norm bound $\lVert z \rVert^2 \le \beta^2$, which is essential for soundness, since without it the verification relations are satisfiable trivially.

\paragraph{Linkable Ring Signatures.}
A ring signature \cite{rivest2001leak,abe20041,cramer1994proofs} lets a signer sign on behalf of an ad-hoc set of public keys without revealing which one it controls; unlike group signatures \cite{chaum1991group}, no manager or coordination is needed. \emph{Linkable} ring signatures \cite{liu2004linkable} add a key image (linking tag): a deterministic public function of the signing secret that is identical across all signatures produced with that secret, enabling double-spend detection while preserving signer ambiguity \cite{van2013cryptonote,noether2016ring,goodell2019concise}. Post-quantum linkable ring signatures follow two main paradigms: trapdoor- or preimage-based tags \cite{lu2019raptor}, and the \emph{commitment-and-serial-number} approach of Lattice RingCT and MatRiCT \cite{alberto2018post,esgin2019matrict}, in which the tag's uniqueness is inherited from the binding of a lattice commitment. Obscura-PQ adopts the latter paradigm, combined with the sequential challenge-chain ring closure of AOS \cite{abe20041} and Borromean ring signatures \cite{Maxwell2015BorromeanRS}, and a two-equation response-sharing structure that is the lattice analogue of the parallel equations that bind the key image in CLSAG \cite{goodell2019concise}. We note that DualRing \cite{yuen2021dualring} realizes a structurally different (challenge-ring, single-response) closure that admits logarithmic compression. While we borrow its response-sharing intuition, we explicitly reject its challenge-ring construction in favor of a linear-size AOS-style chain. Although the AOS-style chain yields larger proof sizes, its sequential structure is essential for our design: it allows the heavy lattice arithmetic to be paused, persisted to storage, and resumed, which maps directly onto the phase-decomposed on-chain verifier required by Algorand's strict execution limits (Section~\ref{sec:onchain}). Sublinear post-quantum ring signatures also exist \cite{beullens2020calamari}, at the cost of substantially heavier verifiers; we discuss the trade-off in Section~\ref{sec:related}.

\subsection{Blockchain Execution Model}
Obscura-PQ targets the Algorand Virtual Machine (AVM) \cite{algorand_avm}, whose resource model shapes the entire on-chain design:

\paragraph{Opcode Budget and Inner-Transaction Pooling.}
Every application call executes under a deterministic per-transaction budget of 700 opcode units, pooled across the transactions of an atomic group \cite{algorand_sc_costs_constraints,algorand_tx_fees}. A contract may issue \emph{inner transactions}; each inner application call contributes a further 700 units to the shared pool. By bursting inner calls to a stateless ``dummy'' application (whose single method simply approves), a contract provisions computational headroom dynamically, a technique we refer to as \emph{OpUp pooling}. The pooled budget of a single group is nevertheless capped at roughly $1.9 \times 10^{5}$ units, which is \emph{below} the cost of one forward NTT: post-quantum verification therefore cannot fit in one group and must be decomposed into a sequence of groups with persistent intermediate state (Section~\ref{sec:onchain}).

\paragraph{Box Storage and Minimum Balance.}
Algorand \emph{box storage} \cite{algorand_box_storage} is an unbounded key--value store scoped to an application. Obscura-PQ stores each commitment and each nullifier as an individual box, so ring-membership and double-spend checks are $O(1)$ assertions on box existence, with no Merkle accumulator, no inclusion proofs, and no global state contention. Every box locks a \emph{minimum balance requirement} (MBR) of $2500 + 400 \cdot (\text{key} + \text{value})$ micro-ALGO in the application account \cite{algorand_box_storage}; boxes are also the only mutable storage that survives across transaction groups, which the protocol exploits to persist verification state. Several further consensus limits matter: application arguments are capped at 2\,048 bytes in total, each transaction carries at most 8 box/application/account references, each box reference adds only 1\,024 bytes of I/O budget charged against the \emph{full} size of every touched box, an atomic transaction group holds at most 16 transactions, and, although the number of boxes is unbounded, a single box value may not exceed 32\,768 bytes \cite{algorand_protocol_parameters}. The last limit is the one that ultimately caps the ring size (Section~\ref{sec:eval}), since the transport box grows linearly in $r$; the 16-transaction cap merely dictates that the transport upload be split across several groups at the largest ring sizes.

\section{Cryptographic Construction}
\label{sec:crypto}
This section formalizes the lattice linkable ring signature at the core of Obscura-PQ. The scheme consists of the algorithms $\CoinKeyGen$, $\SerNum$, $\Sig$, $\Ver$, and $\Link$; Appendix~\ref{app:formal} gives the formal syntax and security definitions.

\subsection{Public Parameters and Challenge Space}
\label{sec:params}
Four public ring elements $a_1, a_2, a_3, a_4 \in \Rq$ are derived by a nothing-up-my-sleeve hash-to-ring map,
\[
a_j \;=\; H_{\mathrm{ring}}(\texttt{DOM\_PP} \,\Vert\, j), \qquad j \in \{1,2,3,4\},
\]
where the index $j$ is encoded as four big-endian bytes, $H_{\mathrm{ring}}$ expands SHA-256 \cite{nist2015sha} in counter mode ($\mathrm{SHA\text{-}256}(\mathrm{seed} \Vert \mathrm{counter})$ blocks) and reads successive 16-bit big-endian chunks modulo $q$, and \texttt{DOM\_PP} is the fixed domain-separation string \texttt{OBSCURA-PQ/DOM\_PP/v2}. Anyone can re-derive and audit the parameters; there is no trusted setup.

The challenge space $\mathcal{C}$ consists of \emph{sparse ternary} polynomials: elements of $\Rq$ with exactly $w = 48$ nonzero coefficients, each $\pm 1$. Its size satisfies $\lvert \mathcal{C} \rvert = \binom{512}{48}\cdot 2^{48} > 2^{256}$, so challenges carry at least 256 bits of entropy. A deterministic map $\HtoC : \{0,1\}^{256} \to \mathcal{C}$ expands a 32-byte seed into positions and signs via SHA-256 counter-mode blocks, skipping collisions; sparsity makes multiplication by a challenge a cheap signed-shift accumulation ($O(nw)$ additions, no transform), which both the prover and the on-chain verifier exploit. All remaining protocol parameters are collected in Table~\ref{tab:params}.

\begin{table}[t]
\centering
\caption{Obscura-PQ system parameters.}
\label{tab:params}
\begin{tabularx}{\textwidth}{@{} l @{\hspace{1.5em}} X r @{}}
\toprule
\textbf{Symbol} & \textbf{Meaning} & \textbf{Value} \\
\midrule
$n$, $q$ & ring degree and modulus of $\Rq = \mathbb{Z}_q[X]/(X^n+1)$ & $512$, $12289$ \\
$\eta$ & width of the centred-binomial secret distribution & $2$ \\
$\varsigma$ & standard deviation of the mask distribution & $2500$ \\
$\beta^2$ & squared-norm acceptance bound per response & $4.2 \times 10^{9}$ \\
$w$ & Hamming weight of ternary challenges ($\lvert\mathcal{C}\rvert > 2^{256}$) & $48$ \\
$\mathcal{H}$ & random oracle (challenge chain, $\HtoC$, $H_{\mathrm{ring}}$, nullifier) & SHA-256 \\
$r_{\max}$ & on-chain ring-size cap & $10$ \\
\bottomrule
\end{tabularx}
\end{table}

\subsection{Coin Generation (\texorpdfstring{\CoinKeyGen}{CoinKeyGen}) and Serial Number (\texorpdfstring{\SerNum}{SN})}
\label{sec:coin}
To deposit, a user samples a coin opening and publishes a binding commitment:
\begin{enumerate}
    \item Sample the spend secret $k$ and commitment randomness $s$ independently from the centred binomial distribution with parameter $\eta = 2$ (coefficients in $[-2,2]$), and derive the deterministic short \emph{serial-number blinding} $e = H_{\mathrm{short}}(k)$, a centred-binomial element expanded from $k$ under a domain-separated hash.
    \item Compute the \emph{coin commitment}
    \[
        C \;=\; a_1 k + a_2 s + a_4 e \;\in\; \Rq ,
    \]
    a rank-one Ring-SIS commitment in the style of \cite{baum2018more}, and publish its forward transform $\NTT(C)$ (1\,024 bytes: 512 coefficients of 2 bytes).
\end{enumerate}
Binding follows from B1: opening $C$ to a second short triple $(k', s', e') \neq (k,s,e)$ yields the Ring-SIS collision $a_1(k - k') + a_2(s - s') + a_4(e - e') = 0$. Hiding follows from B2. Each coin uses fresh, independent randomness, so compromise of one coin does not propagate.

The \emph{serial number} (key image) of a coin is the deterministic Ring-LWE image
\[
    \sn \;=\; a_3 k + e \;\in\; \Rq ,
\]
computed by $\SerNum(k)$ at withdrawal time and revealed \emph{in full}, transported as $\NTT(\sn)$. The compact double-spend key is its digest
\[
    \snstar \;=\; \mathrm{SHA\text{-}256}\bigl(\pack(\NTT(\sn))\bigr).
\]
Because the opening $(k, e)$ is unique (binding) and $\sn$ is deterministic in it, every spend of a coin, honest or adversarial, yields the identical $\snstar$; and because $\sn = a_3 k + e$ has the form of a Ring-LWE sample (with $e = H_{\mathrm{short}}(k)$ modeled as a random-oracle short error), recovering $k$ from $\sn$ (search Ring-LWE) and deciding whether a given commitment shares its $k$ with a given serial number (decisional Ring-LWE in the random-oracle model, Remark~\ref{rem:gaps}) are both hard, so revealing $\sn$ neither exposes the secret nor links the withdrawal to a deposit. The blinding $e$ is bound into $C$ via $a_4 e$, so its determinism, hence double-spend soundness, follows from binding rather than from the honest hash derivation.

\subsection{The Two Verification Relations and the Wire Format}
\label{sec:relations}
Fix a ring of commitments $\mathbf{C} = (C_0, \dots, C_{r-1})$ and a serial number $\sn$. For a response triple $(z_{k,i}, z_{s,i}, z_{e,i})$ and challenge $\chi_i \in \mathcal{C}$, define the two \emph{response-sharing} linear relations
\begin{equation}
\label{eq:relations}
    t^{\mathrm{pk}}_i \;=\; a_1 z_{k,i} + a_2 z_{s,i} + a_4 z_{e,i} - \chi_i\, C_i,
    \qquad
    t^{\mathrm{sn}}_i \;=\; a_3 z_{k,i} + z_{e,i} - \chi_i\, \sn .
\end{equation}
The first opens the ring commitment; the second reproduces the serial number. Both consume the \emph{same} responses $z_{k,i}$ and $z_{e,i}$, which is what binds the published serial number to the spent coin: a witness satisfying both relations at one index is a single short $(k, e)$ that simultaneously opens $C_i$ and satisfies $\sn = a_3 k + e$ (Appendix~\ref{app:formal}).

By linearity of the NTT, the relations are evaluated pointwise in the frequency domain (note the identity multiplier on $z_{e,i}$ in the second relation, whose transform is the all-ones vector):
\begin{equation}
\label{eq:freq}
    \widehat{t^{\mathrm{pk}}_i} = \hat{a}_1 \odot \hat{z}_{k,i} + \hat{a}_2 \odot \hat{z}_{s,i} + \hat{a}_4 \odot \hat{z}_{e,i} - \hat{\chi}_i \odot \hat{C}_i,
    \qquad
    \widehat{t^{\mathrm{sn}}_i} = \hat{a}_3 \odot \hat{z}_{k,i} + \hat{z}_{e,i} - \hat{\chi}_i \odot \hat{\sn} .
\end{equation}

This motivates the protocol's wire format: commitments are \emph{stored} as $\NTT(C)$, the serial number is \emph{transported} as $\NTT(\sn)$, the fixed-base transforms $\hat{a}_1, \hat{a}_2, \hat{a}_3, \hat{a}_4$ are precomputed once at deployment, and the Fiat--Shamir chain hashes the packed frequency-domain images. Consequently the verifier computes only \emph{forward} transforms of per-member data ($z_{k,i}$, $z_{s,i}$, $z_{e,i}$, $\chi_i$) and never an inverse transform. Since the NTT is a bijection on $\Rq$, binding is preserved and $\NTT(\sn)$ determines $\sn$ exactly, the serial number remains revealed in full.

\subsection{Signing (\texorpdfstring{\Sig}{Sign})}
\label{sec:sign}
The message is the \emph{settlement context}
\[
    \mu \;=\; \mathrm{recipient}(32) \,\Vert\, \mathrm{relayer}(32) \,\Vert\, \mathrm{fee}(8) \,\Vert\, \mathrm{appID}(8) \quad \text{(80 bytes, big-endian)},
\]
binding all values an adversary could profitably alter. The signer holds the opening $(k,s,e)$ of $C_\pi$ at a secret index $\pi \in [0, r)$ and proceeds as follows; the Fiat--Shamir chain values $c_i \in \{0,1\}^{256}$ seed the ternary challenges $\chi_i = \HtoC(c_i)$, and the chain-update function is
\[
    c_{i+1 \bmod r} \;=\; \mathcal{H}\Bigl(\mu \,\Vert\, \pack(\hat{\sn}) \,\Vert\, \pack\bigl(\widehat{t^{\mathrm{pk}}_i}\bigr) \,\Vert\, \pack\bigl(\widehat{t^{\mathrm{sn}}_i}\bigr) \,\Vert\, (i{+}1 \bmod r) \Bigr),
\]
with the index encoded as 2 bytes.
\begin{enumerate}
    \item \textbf{Commitment at the true index.} Sample short masks $y_k, y_s, y_e$ from the rounded Gaussian $D_\varsigma$ and compute
    \[
        \widehat{t^{\mathrm{pk}}_\pi} = \hat{a}_1 \odot \hat{y}_k + \hat{a}_2 \odot \hat{y}_s + \hat{a}_4 \odot \hat{y}_e,
        \qquad
        \widehat{t^{\mathrm{sn}}_\pi} = \hat{a}_3 \odot \hat{y}_k + \hat{y}_e ,
    \]
    then derive $c_{\pi+1}$ by the chain update.
    \item \textbf{Decoy simulation.} For $i = \pi{+}1, \dots, \pi{-}1$ (indices mod $r$): sample responses $z_{k,i}, z_{s,i}, z_{e,i} \leftarrow D_\varsigma$, set $\chi_i = \HtoC(c_i)$, evaluate the relations~\eqref{eq:freq}, and derive $c_{i+1}$ by the chain update.
    \item \textbf{Closure.} At the true index, set $\chi_\pi = \HtoC(c_\pi)$ and close over the integers:
    \[
        z_{k,\pi} = y_k + \chi_\pi\, k, \qquad z_{s,\pi} = y_s + \chi_\pi\, s, \qquad z_{e,\pi} = y_e + \chi_\pi\, e .
    \]
    \item \textbf{Rejection sampling.} Compute the standard Fiat--Shamir-with-aborts acceptance probability \cite{lyubashevsky2012lattice}, a product of the per-response factors
    \[
        \rho \;=\; \frac{1}{M} \prod_{x \in \{k,s,e\}} \exp\left( \frac{-2\langle z_{x,\pi}, \chi_\pi x \rangle + \lVert \chi_\pi x \rVert^2}{2\varsigma^2} \right)
    \]
    for a bounding constant $M$, and accept with probability $\rho$. If rejected, or if any of $\lVert z_{k,\pi} \rVert^2$, $\lVert z_{s,\pi} \rVert^2$, $\lVert z_{e,\pi} \rVert^2$ exceeds $\beta^2$, restart from step 1 with fresh masks. The probabilistic acceptance ratio decouples the distribution of accepted responses exactly from the secret $(k,s,e)$.
\end{enumerate}
The signature is
\[
    \sigma \;=\; \bigl(c_0,\; \{(z_{k,i},\, z_{s,i},\, z_{e,i})\}_{i=0}^{r-1}\bigr),
\]
of size exactly $32 + 3072r$ bytes (responses are packed as 512 signed 16-bit coefficients each). Sparsity of $\chi_\pi$ and shortness of $(k,s,e)$ keep the closure short, so honest responses pass the bound with overwhelming probability while the bound excludes the trivial (unboundedly long) solutions an adversary could otherwise construct.

\subsection{Verification (\texorpdfstring{\Ver}{Verify}) and Linking (\texorpdfstring{\Link}{Link})}
\label{sec:verify}
Given $(\mathbf{C}, \mu, \hat{\sn}, \sigma)$, the verifier sets $c \leftarrow c_0$ and, for $i = 0, \dots, r-1$:
\begin{enumerate}
    \item asserts the norm bounds $\lVert z_{k,i} \rVert^2 \le \beta^2$, $\lVert z_{s,i} \rVert^2 \le \beta^2$, and $\lVert z_{e,i} \rVert^2 \le \beta^2$ on the \emph{raw signed} coefficient vectors;
    \item computes $\chi_i = \HtoC(c)$ and the frequency-domain images~\eqref{eq:freq};
    \item updates $c$ by the chain-update function.
\end{enumerate}
It accepts iff the final $c$ equals $c_0$ (ring closure). $\Link$ declares two withdrawals linked iff their nullifiers $\snstar$ coincide. Verification performs, per member, exactly four forward NTTs (for $z_{k,i}$, $z_{s,i}$, $z_{e,i}$, and $\chi_i$), one pointwise combine, three norm checks, and one hash. This cost profile drives the on-chain design detailed in Section~\ref{sec:onchain}. Norm checks are performed on the raw signed integers as transmitted; checking a reduced representative would wrap around and be unsound.

\section{Protocol Description}
\label{sec:protocol}
Obscura-PQ operates in two phases, deposit and withdrawal, mediated by a fixed-denomination pool contract. Throughout, $\mathcal{B}_C$ and $\mathcal{B}_N$ denote the on-chain commitment and nullifier box sets.

\subsection{Deposit Phase}
As illustrated in Figure~\ref{fig:seq_deposit}, the user runs $\CoinKeyGen$ locally, stores the opening $(k,s,e)$, and submits an atomic group of two transactions: an application call \textsf{deposit} carrying $\NTT(C)$, and a payment of the fixed denomination \textsc{denom} to the application address. The contract asserts the group shape, the payment amount and receiver, and the payload length; asserts that the commitment box keyed by
\[
    \texttt{"c"} \,\Vert\, \mathrm{SHA\text{-}256}\bigl(\NTT(C)\bigr)
\]
does not yet exist (duplicate commitments are rejected); creates the box with value $\NTT(C)$; and increments a deposit counter. The 1\,024-byte box locks an MBR of 425{,}300 micro-ALGO in the application account, the protocol's principal per-deposit storage cost, which is deducted from the eventual payout (Section~\ref{sec:eval}). The serial number is \emph{not} published at deposit; only the commitment is.

\subsection{Anonymity-Set Construction}
Prior to withdrawal the client forms a ring $\mathbf{C} = (C_0, \dots, C_{r-1})$ of on-chain commitments containing its own coin at a secret index, entirely off-chain:
\begin{enumerate}
    \item Fetch the commitment boxes and verify the user's own commitment is present;
    \item Query the indexer for recent deposits and order the decoy pool by recency; decoy sampling that fails to mimic plausible spending behavior enables statistical deanonymization \cite{moser2018empirical,kumar2017traceability}, and recency bias counters temporal intersection attacks;
    \item Draw decoys uniformly from a \emph{bounded recent pool} (the 20 most recent valid decoys), avoiding the deterministic ``always the newest $r{-}1$'' selection;
    \item Randomly select $r-1$ decoy commitments from the recent pool (up to the maximum ring size $r_{\max}$), insert the user's own commitment, and apply a Fisher--Yates shuffle \cite{fisher1953statistical} seeded from a cryptographically secure generator, making the signer's position uniform.
\end{enumerate}

\begin{figure}[t]
\centering
\includegraphics[width=0.88\textwidth]{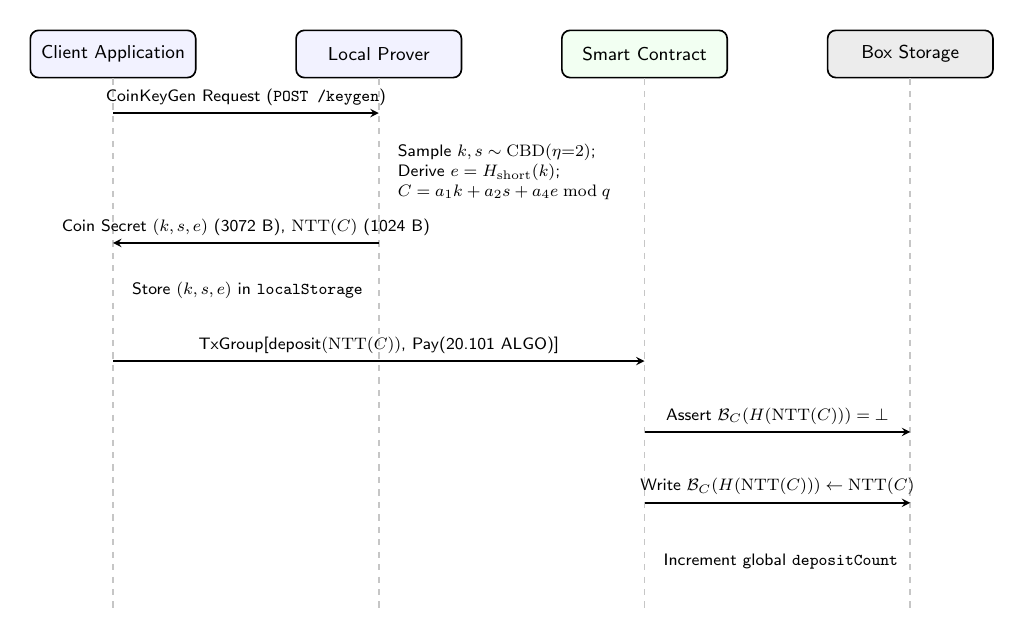}
\caption{Deposit phase sequence diagram. The client application delegates \textsf{CoinKeyGen} to the local prover, which samples the short coin secret $(k, s)$ from a centred binomial distribution ($\eta = 2$), derives the deterministic blinding $e = H_{\mathrm{short}}(k)$, and computes the Ring-SIS commitment $C = a_1 k + a_2 s + a_4 e \bmod q$ over the cyclotomic ring. The client stores the returned coin secret locally and submits an atomic transaction group pairing the \textsf{deposit} application call with the fixed-denomination payment of 20.101 ALGO. The smart contract verifies the payment, rejects duplicate commitments, allocates a new commitment box keyed by the hash of $\mathrm{NTT}(C)$ in on-chain storage, and increments the global deposit counter.}
\label{fig:seq_deposit}
\end{figure}

\subsection{Withdrawal Phase}
\label{sec:withdrawal}
As depicted in Figure~\ref{fig:seq_withdraw}, the withdrawal phase begins off-chain. The client assembles the context $\mu$ (recipient, relayer, total fee, application ID) and requests a proof from the local prover, receiving $\snstar$, $\NTT(\sn)$, and $\sigma$. In the implementation, the withdrawing wallet \emph{self-relays}: it fronts all transaction fees, and the total is bound into $\mu$ as the fee and reimbursed from the pool at settlement. The protocol equally supports third-party relayers; because the relayer address and fee are signature-bound, a relayer can neither redirect funds nor overcharge. However, a relayer network is not part of the current implementation.

On-chain, the withdrawal spans a sequence of transaction groups (all constructed and signed in a single wallet interaction), detailed in Section~\ref{sec:onchain}: the proof is streamed into a transport box, $9r$ verification phases recompute the challenge chain member by member, and the final \textsf{withdraw} call checks closure and settles. Settlement is atomic: the contract deletes the transient boxes, records $\snstar$ in $\mathcal{B}_N$, and issues three inner payments, the payout $\textsc{denom} - \mathrm{fee} - c_{\mathrm{stor}}$ to the recipient (where $c_{\mathrm{stor}} = 441{,}000$ micro-ALGO covers the permanent commitment and nullifier MBRs), the bound fee to the relayer, and the refund of all fronted transient MBRs.

\begin{figure}[htbp]
\centering
\includegraphics[width=\textwidth]{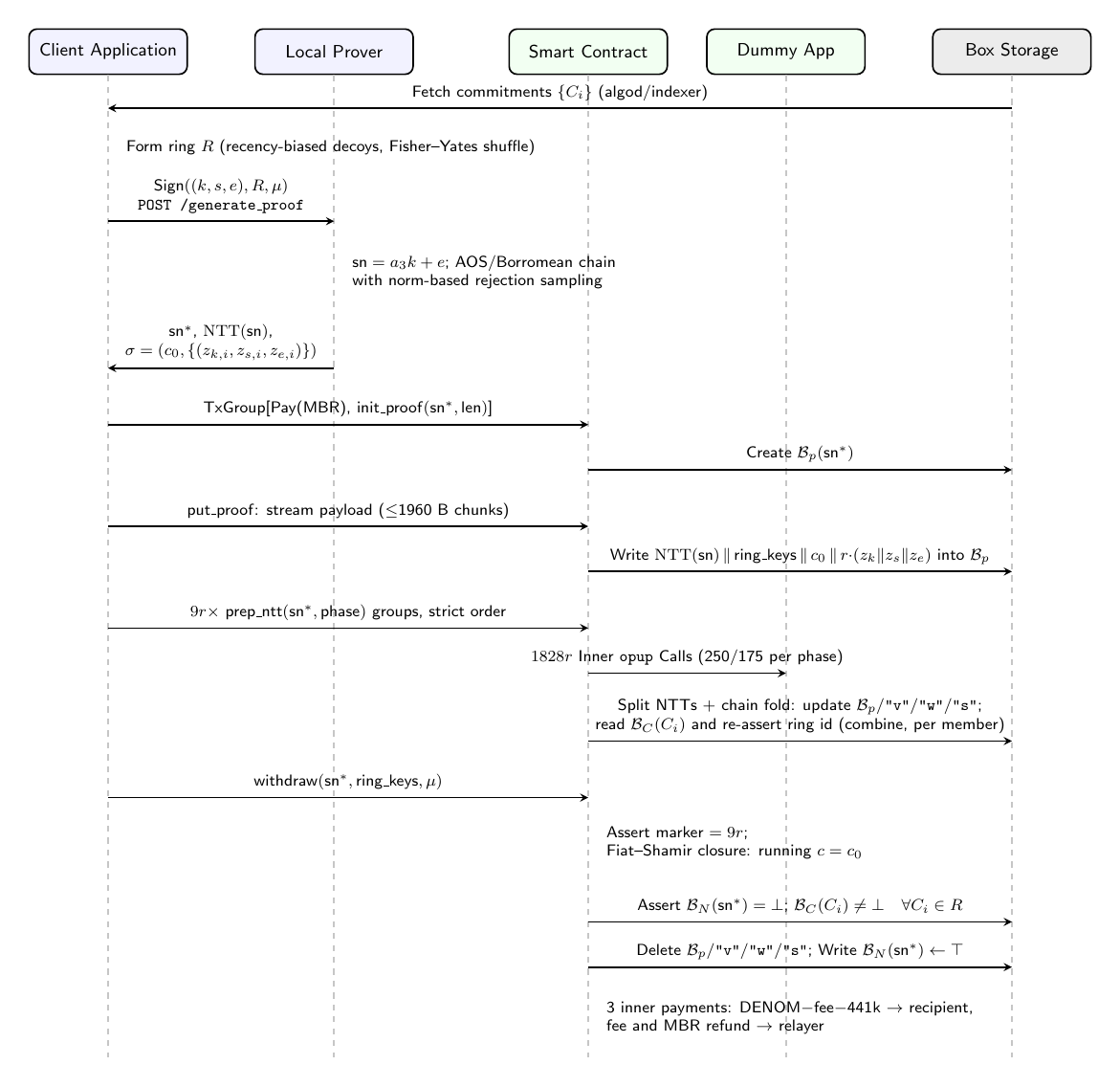}
\caption{Withdrawal phase sequence diagram. The client application constructs an anonymity set from on-chain public commitments and delegates signing to the local prover, which derives the serial number $\mathsf{sn} = a_3 k + e$ and produces the AOS/Borromean-style lattice ring signature under norm-based rejection sampling. Because the signature exceeds application-argument limits, the client first creates a proof transport box (\textsf{init\_proof}) and streams the payload into it in chunks (\textsf{put\_proof}). Verification then runs as $9r$ strictly ordered \textsf{prep\_ntt} groups that perform the split NTTs and fold the Fiat--Shamir challenge chain, expanding the opcode budget via $1828r$ inner \texttt{opup} calls to a dummy application and re-asserting each ring member's commitment against Box Storage. The final \textsf{withdraw} call asserts chain closure ($c = c_0$), validates the nullifier and ring membership, deletes the transient boxes, records the nullifier, and settles the payment with three inner transactions.}
\label{fig:seq_withdraw}
\end{figure}

\subsection{Double-Spend Prevention}
The digest $\snstar$ is the nullifier. The contract binds it to the transported serial number by recomputing $\snstar = \mathrm{SHA\text{-}256}(\pack(\NTT(\sn)))$, asserts that no box $\texttt{"n"} \Vert \snstar$ exists, and creates that box atomically with the payout. Because the opening $(k, e)$ is unique by commitment binding and $\sn = a_3 k + e$ is deterministic, any second spend of the same coin presents the identical $\snstar$ and reverts, regardless of the decoy set chosen. A malicious signer cannot substitute a fresh serial number: a verifying signature with $\sn' \neq a_3 k + e$ would require opening $C_\pi$ to a second short opening (a Ring-SIS collision) or violating relation~\eqref{eq:relations} (Appendix~\ref{app:formal}).

\section{On-Chain Execution}
\label{sec:onchain}
This section describes how the verifier of Section~\ref{sec:verify} executes natively on the AVM. The main difficulty is quantitative: verifying \emph{one ring member} requires four degree-512 forward NTTs, and a single forward NTT with its norm check consumes approximately $2.05 \times 10^{5}$ opcode units (measured on the deployed contract via the node's \texttt{simulate} endpoint), whereas one transaction group pools at most about $1.9 \times 10^{5}$ units. Lattice verification therefore cannot fit in any single group, and the design decomposes it into a resumable sequence of groups with box-persisted intermediate state.

\subsection{Proof Transport via Box Streaming}
The proof material, $\NTT(\sn)$ (1\,024 B), the $r$ ring-member identifiers (32 B each), the chain seed $c_0$ (32 B), and the responses ($3072r$ B), totals $1056 + 3104r$ bytes and exceeds the 2\,048-byte application-argument limit for any $r \ge 1$. The client therefore streams it into an application-owned \emph{transport box} keyed $\texttt{"p"} \Vert \snstar$ before verification begins:
\begin{enumerate}
    \item \textsf{init\_proof}: an atomic group in which a payment covering the transport box's MBR plus the transient-state MBR (482{,}600 micro-ALGO) \emph{precedes} the application call that creates the box; the funder's address is recorded in the box header, and only the funder may write to or clear the box.
    \item \textsf{put\_proof}: the payload is written in chunks of at most 1\,960 bytes (the residual application-argument budget), bounds-checked against the declared length. Because each write is offset-addressed and mutually independent (the box becomes load-bearing only when \textsf{prep\_ntt} first reads it), the chunks need not share one atomic group; when their number would exceed the 16-transaction group cap, they are split across several sequential groups, each carrying enough box references to cover the transport box's full-size I/O budget. Writing is frozen permanently once verification starts.
    \item \textsf{clear\_proof}: at any time before settlement the funder may delete the flow boxes and reclaim the full fronted MBR. Box keys are deterministic in $\snstar$, so a withdrawal that failed mid-flow leaves recognizable state; the client detects and clears it automatically before retrying (self-healing).
\end{enumerate}
All transient MBRs are refunded at settlement, so the permanent state growth of a withdrawal is the empty nullifier box alone.

\subsection{Split NTTs and the Per-Member Phase Machine}
\label{sec:phases}
Each forward NTT is split at butterfly length 32 into two opcode-pooled phases: a \emph{heavy} half (norm check and signed-to-modular reduction of the input, the $\psi$-premultiplication fused with the bit-reversal permutation, and butterfly stages up to length 32; 250 inner OpUp calls) and a \emph{light} half (stages 64 through 512; 175 inner OpUps). Intermediate buffers persist in box storage between phases: response transforms overwrite their slots in the transport box in place, and the challenge transform lives in a dedicated box $\texttt{"w"} \Vert \snstar$. All coefficient products are bounded by $q^2 \approx 1.5 \times 10^{8} < 2^{64}$, so native 64-bit arithmetic suffices; the twiddle-factor, $\psi$-power, and bit-reversal tables are compile-time constants generated from a reference implementation (Section~\ref{sec:impl}).

Verification of member $i$ comprises nine phases, invoked as successive \textsf{prep\_ntt} application calls (Figure~\ref{fig:pipeline}): phases 0--1 transform $z_{k,i}$, phases 2--3 transform $z_{s,i}$, phases 4--5 transform $z_{e,i}$, and phases 6--7 derive $\chi_i = \HtoC(c)$ from the running chain value and transform it. Finally, phase 8 performs the pointwise two-relation combine~\eqref{eq:freq}. It reads the fixed-base transforms from the parameter box and the member's commitment box, re-asserts that the claimed ring identifier equals the SHA-256 digest of the stored commitment, and folds the chain update into a running-state box $\texttt{"s"} \Vert \snstar$. Phase 0 of member 0 additionally seeds the state box with $c_0$ and the settlement context (recipient, relayer, fee). A marker box $\texttt{"v"} \Vert \snstar$ enforces strict phase ordering; because the challenge of member $i$ depends on member $i{-}1$'s relation images, members are processed strictly in order. The per-member cost is
\[
    4 \times (250 + 175) + 128 \;=\; 1828 \ \text{inner OpUp calls} \;\approx\; 1.828\ \text{ALGO}
\]
at the 1\,000 micro-ALGO minimum fee, which, together with the growing transport-box MBR, is the practical force keeping rings small (Section~\ref{sec:eval}). Auxiliary \texttt{opup} calls in each group act purely as reference carriers, working around the 8-reference and box-I/O-budget limits.

\begin{figure}[t]
\centering
\includegraphics[width=1\textwidth]{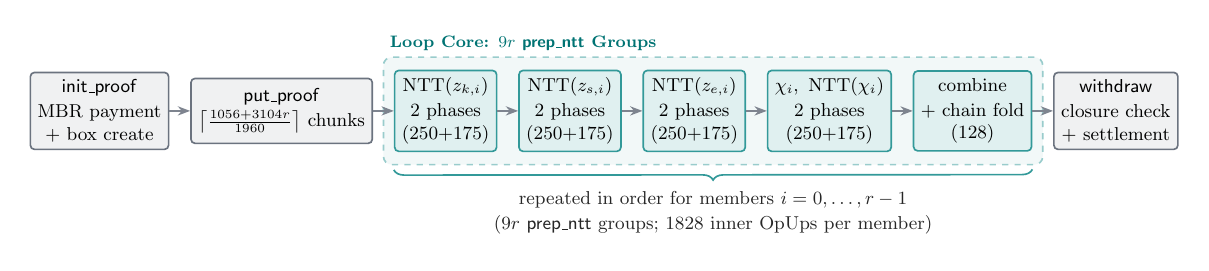}
\caption{The withdrawal pipeline as a sequence of atomic transaction groups. Parenthesized numbers are the inner OpUp calls provisioning each phase's pooled opcode budget. Intermediate transforms persist in box storage between groups; a marker box enforces strict phase ordering, and the final \textsf{withdraw} call asserts that the running Fiat--Shamir chain closed back to $c_0$ before settling. The loop core consists of $9r$ \textsf{prep\_ntt} groups.}
\label{fig:pipeline}
\end{figure}

\subsection{Settlement and Method Semantics}
After all $9r$ phases, the \textsf{withdraw} call asserts: the transport-box layout and the equality of the submitted ring identifiers with the box copy; the binding $\snstar = \mathrm{SHA\text{-}256}(\pack(\NTT(\sn)))$; the marker value $9r$ (all phases complete); \emph{ring closure}, i.e., the running chain value equals $c_0$; the absence of the nullifier box; the existence and digest-match of every ring member's commitment box; and $\mathrm{fee} + c_{\mathrm{stor}} < \textsc{denom}$. It then deletes the flow boxes, creates the nullifier box, and issues the three inner payments of Section~\ref{sec:withdrawal}. Table~\ref{tab:methods} summarizes the contract interface.

Every assertion failure reverts its group, and settlement values (recipient, relayer, fee) are read from the state box, the same values every challenge was computed over, so the payout cannot deviate from what the signature authorized. The nullifier is recorded iff the chain closed, and funds move iff the nullifier is recorded. The groups of the pipeline are \emph{not} atomic with one another; a mid-flow abort strands only refundable boxes, which the self-healing path reclaims. Phase submission is permissionless: any account may advance the pipeline for a given $\snstar$. This never endangers funds, because settlement pays only to the settlement context folded into the challenge chain and any mismatched context makes closure fail; it does mean that a pipeline seeded with an incorrect context, for instance by a party that races the first phase, will not close and must be cleared and restarted by its funder. The design therefore guarantees fund safety but not censorship-resistant liveness for a targeted withdrawal.

\begin{table}[t]
\centering
\caption{Smart-contract methods (routing on the first application argument).}
\label{tab:methods}
\setlength{\tabcolsep}{10pt}
\begin{tabularx}{\textwidth}{@{}l X@{}}
\toprule
\textbf{Method} & \textbf{Guard and effect} \\
\midrule
\textsf{deposit} & group of 2 with \textsc{denom} payment; commitment box absent $\Rightarrow$ create $\texttt{"c"}\Vert H(\NTT(C))$, increment counter \\
\textsf{init\_params} / \textsf{put\_params} & creator-only, deploy-time; stream the 4\,096-byte fixed-base transforms $\hat{a}_1 \Vert \hat{a}_2 \Vert \hat{a}_3 \Vert \hat{a}_4$ into the \texttt{pp} box \\
\textsf{init\_proof} & preceding payment $\ge$ transport MBR $+$ 482{,}600; payload length $= 1056 + 3104r$, $r \in [1, r_{\max}]$; box absent $\Rightarrow$ create $\texttt{"p"}\Vert\snstar$, record funder \\
\textsf{put\_proof} & sender is the recorded funder; verification not started; bounds-checked chunk write \\
\textsf{clear\_proof} & sender is the recorded funder $\Rightarrow$ delete flow boxes, refund the full MBR \\
\textsf{prep\_ntt} & marker equals the requested phase, phase $< 9r$ $\Rightarrow$ run the phase's substep, issue its inner OpUps, advance the marker \\
\textsf{withdraw} & all settlement assertions above $\Rightarrow$ delete flow boxes, create $\texttt{"n"}\Vert\snstar$, three inner payments \\
\textsf{opup} & unconditional approve (budget target / reference carrier) \\
\bottomrule
\end{tabularx}
\end{table}

\section{Implementation}
\label{sec:impl}
Obscura-PQ is deployed as a full-stack application partitioned into three trust domains (Figure~\ref{fig:impl_arch}); the complete implementation source code is publicly available (see \emph{Code Availability}).

\paragraph{Smart-Contract Layer.}
The contract is written in PyTeal \cite{algorand_pyteal} (version 0.27.0) and compiles to TEAL v10 bytecode. All lattice arithmetic, the table-driven negacyclic NTT, signed-to-modular reduction, norm accumulation, the sparse ternary challenge expansion, and the pointwise combine, is implemented directly in TEAL subroutines over byte-packed coefficient arrays. The 4\,096-byte fixed-base transforms would push the approval program past its 8\,192-byte (four-page) size cap and exceed the 2\,048-byte argument limit, so they are provisioned once at deployment into a dedicated parameter box and read only by the combine phase.

\paragraph{Local Prover.}
$\CoinKeyGen$, $\SerNum$, and $\Sig$ execute in a local Python prover exposing an HTTP API on \texttt{localhost}. The prover must run on the user's own machine: the coin opening $(k,s,e)$ is the spend authority, and any party obtaining it can steal the coin and break anonymity. In the current implementation the secret travels between browser and prover over local HTTP and is persisted in plaintext browser storage with manual export/import; a hardened deployment would use an in-process (WebAssembly) prover or authenticated local IPC and an encrypted secret store. Signing takes milliseconds to seconds, dominated by the $O(r)$ ring pass and a small expected number of rejection-sampling restarts.

\paragraph{Client Application.}
The client is a React/TypeScript application handling wallet integration \cite{pera_connect}, ring formation (recency-biased decoy selection over a bounded recent pool, Fisher--Yates shuffling seeded by \texttt{crypto.getRandomValues()} \cite{mdn_crypto_getrandomvalues}), fee and box-reference planning across the pipeline's transaction groups, proof streaming, phase submission, and self-healing cleanup of stranded flows. The repository additionally ships deployment scripts (which provisions the dummy application, funds the contract, and uploads the parameter box), a bytecode-verification script that recompiles the PyTeal source and compares it byte-for-byte against the deployed program, and two analysis tools (a transaction-graph explorer and an indexer-backed transaction classifier) used to audit ledger state.

\begin{figure}[t]
\centering
\includegraphics[width=0.95\textwidth]{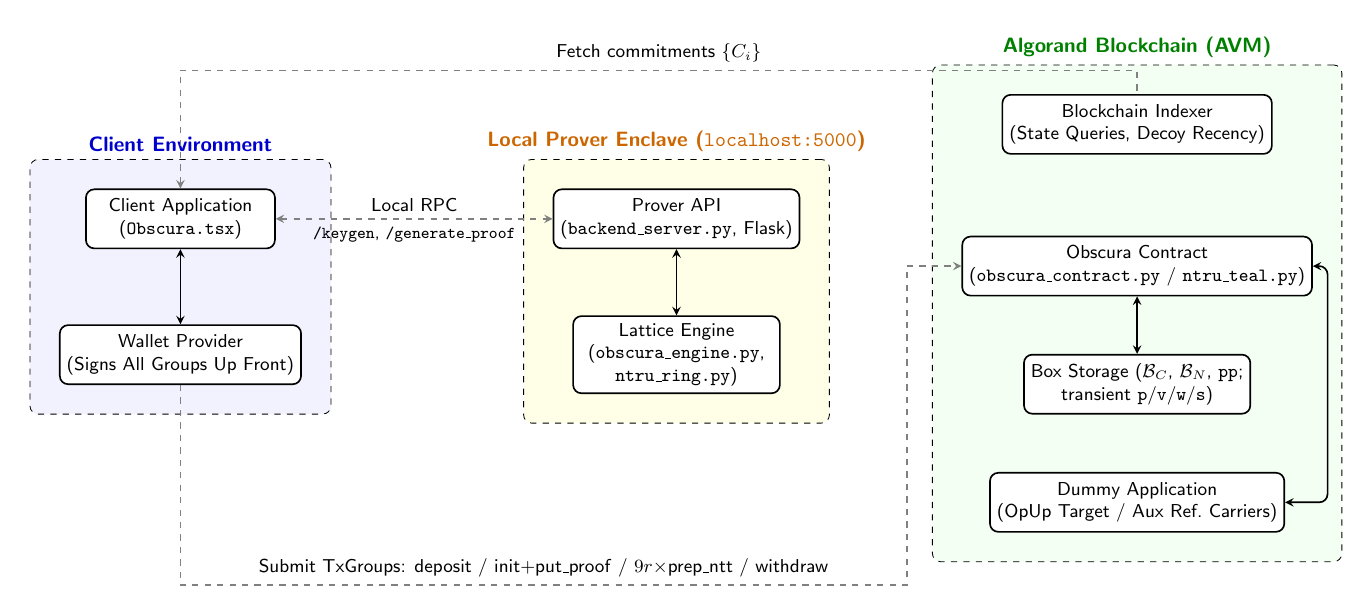}
\caption{Obscura-PQ implementation architecture. The system is partitioned into three distinct trust domains. The client environment handles decoy selection and transaction construction, with the wallet provider signing all withdrawal groups in a single prompt. The local prover enclave, a Flask service on \texttt{localhost:5000}, executes the heavy lattice arithmetic (coin sampling, Ring-SIS commitment, and AOS/Borromean-style signing with rejection sampling) without exposing the coin secret $(k,s,e)$ to the network. The Algorand blockchain enforces state transitions, utilizing the indexer for state queries and decoy recency ordering, Box Storage for membership and nullifier checks alongside the public-parameter and transient withdrawal-flow boxes, and a dummy application for opcode pooling and auxiliary box-reference carrying.}
\label{fig:impl_arch}
\end{figure}

\section{Security and Privacy Analysis}
\label{sec:sec}
We analyze Obscura-PQ under an explicit adversarial model. Formal definitions, theorem statements, and proofs are given in Appendix~\ref{app:formal}; this section states the guarantees, their assumptions, and their limitations.

\subsection{Threat Model}
We assume a public, immutable ledger: a \emph{globally passive adversary} \cite{dingledine2004tor} observes every commitment, every ring, every serial number, every context $\mu$, and every signature. An \emph{active} adversary may submit malformed signatures, attempt double-spends and replays, fabricate ring members, choose serial numbers adversarially, act as a malicious relayer or block proposer, and tamper with transport boxes mid-upload. The \emph{quantum} adversary may run Shor's and Grover's algorithms \cite{shor1994algorithms,grover1996fast}. We assume the Algorand consensus layer and AVM semantics are correct \cite{gilad2017algorand}, and that $\Sig$ executes in a trusted local environment; a remote prover would trivially break both anonymity and theft resistance. Network-layer deanonymization and endpoint compromise are out of scope.

\subsection{Guarantees}
\label{sec:guarantees}

\paragraph{Unforgeability and Theft Resistance.}
The challenge chain is \emph{relaxed} special-sound: two accepting transcripts that share relation images but differ in one challenge yield, by the standard extractor, a \emph{relaxed} short opening $(\bar z_k,\bar z_s,\bar z_e)$ satisfying $a_1 \bar z_k + a_2 \bar z_s + a_4 \bar z_e = \bar\chi\, C_i$ and $a_3 \bar z_k + \bar z_e = \bar\chi\,\sn$ for the challenge difference $\bar\chi$ (Theorem~\ref{thm:unforgeability}; the extractor recovers an exact opening only relative to $\bar\chi$, as short challenge differences are not invertible in the fully splitting ring). Producing a verifying withdrawal for a ring of honest deposits without knowing a ring member's opening therefore breaks Ring-SIS. Two implementation details are essential to soundness and are enforced on-chain: the norm bounds on every raw response (without which the relations are trivially satisfiable), and the ring-existence check with digest matching (without which an adversary could inject a fabricated commitment whose opening it knows). The transport-box freeze prevents rewriting a response after its norm has been checked.

\paragraph{Signer Ambiguity and Unlinkability.}
After rejection sampling, the true-index responses are distributed (up to negligible statistical distance) like the forward-sampled decoy responses and independently of $(k,s,e)$; every challenge is a random-oracle output independent of the signer index; and the commitment $C = a_1 k + a_2 s + a_4 e$ is pseudorandom under Ring-LWE, so the transcript is simulatable from $(\mathbf{C}, \mu, \sn)$ alone. The only residual signer-dependent value is the serial number $\sn = a_3 k + e$, which has the form of a Ring-LWE sample with the deterministic error $e = H_{\mathrm{short}}(k)$; modelling $H_{\mathrm{short}}$ as a random oracle it hides $k$, so its linkage to a commitment is hard under decisional Ring-LWE in the random-oracle model (Remark~\ref{rem:gaps}). Consequently no probabilistic polynomial-time adversary identifies the signer with probability non-negligibly exceeding $1/r$ (Theorem~\ref{thm:anonymity}), and no observer links a deposit to a withdrawal beyond the $1/r$ ring ambiguity. Deciding whether a commitment and a serial number share the same $k$ reduces to decisional Ring-LWE (in the ROM), and recovering the short $k$ from $\sn$ is the search Ring-LWE problem. The anonymity set is the \emph{ring}, of size at most $r_{\max}$, not the whole pool (see Section~\ref{sec:limitations}).

\paragraph{Linkability, Double-Spend Prevention, and Non-Frameability.}
Every verifying signature carries the serial number $a_3 k + e$ of the \emph{unique} short opening of one ring member (Theorem~\ref{thm:linkability}); uniqueness is commitment binding, not an honesty assumption, so double-spend detection is sound even against a malicious signer who deliberately deviates from the signing algorithm. Conversely, to pre-spend or block an honest user's coin an adversary would have to produce a verifying signature carrying that coin's serial number, which by the extraction argument requires the coin's own opening (Lemma~\ref{lem:nonframe}). The contract enforces nullifier uniqueness atomically with the payout, so each coin is spendable exactly once.

\paragraph{Replay and Front-Running Resistance.}
The full context $\mu$ (recipient, relayer, fee, and application identifier) is hashed into every challenge; altering any component breaks ring closure (Lemma~\ref{lem:replay}). The contract additionally settles to the values stored in the chain-bound state box, so a malicious relayer or proposer cannot redirect the payout or inflate the fee even at the settlement step. Replaying an executed withdrawal is blocked by the recorded nullifier, and the application-identifier binding blocks cross-deployment replay.

\paragraph{Auditability.}
As in accountable-privacy designs \cite{garman2016accountable}, a user may voluntarily disclose a coin opening $(k,s,e)$ to an auditor, who can verify the deterministic blinding $e = H_{\mathrm{short}}(k)$ and can then verify both the deposit commitment $C = a_1 k + a_2 s + a_4 e$ and the withdrawal serial number $\sn = a_3 k + e$, establishing provenance without affecting any other participant's anonymity.

\subsection{Limitations and Open Gaps}
\label{sec:limitations}
We state explicitly what the analysis does \emph{not} establish.
\begin{itemize}
    \item \textbf{Ring-bounded anonymity.} Signer ambiguity is $1/r$, not pool-wide hiding as in accumulator-based SNARK pools. Operational behavior (immediate withdrawal after deposit, correlated timing, small pools) can degrade the effective anonymity below the cryptographic ceiling \cite{moser2018empirical,kumar2017traceability,beres2021blockchain}; the client's recency-biased selection and secure shuffling mitigate but cannot eliminate this.
    \item \textbf{Anonymity in the ROM and self-correlated serial-number noise.} Anonymity and unlinkability reduce to decisional Ring-LWE \emph{in the random-oracle model} (ROM). Because the serial-number error $e = H_{\mathrm{short}}(k)$ is a deterministic function of the secret $k$ (bound into the commitment via $a_4 e$), the serial number $\sn = a_3 k + e$ constitutes a \emph{self-correlated-noise} Ring-LWE instance. Consequently, the reduction models $H_{\mathrm{short}}$ as a random oracle rather than claiming a black-box reduction to standard decisional Ring-LWE (see Assumption~\ref{ass:link} and Remark~\ref{rem:gaps}); recovering $k$ from $\sn$ remains search Ring-LWE. A standard-model reduction could be achieved by sampling $e$ independently and storing it as part of the coin, though this would increase the size of the persisted secret. Separately, because short challenge differences are not guaranteed to be invertible in the fully splitting Falcon-512 ring, our unforgeability extractor recovers a \emph{relaxed} opening. This extraction is nevertheless sufficient to prove theft resistance and non-frameability (Appendix~\ref{app:formal}, Remark~\ref{rem:gaps}).
    \item \textbf{Classical ROM proofs.} The reductions in Appendix~\ref{app:formal} are in the classical random-oracle model, following the analysis templates of \cite{lyubashevsky2012lattice,esgin2019lattice,yuen2021dualring}. A dedicated QROM analysis \cite{boneh2011random,don2019security} of the composed scheme has not been carried out.
    \item \textbf{Sampler simplification.} The implemented mask sampler uses a rounded Gaussian with a norm-bound-only acceptance test, rather than a constant-time discrete Gaussian equipped with the full Fiat--Shamir-with-aborts acceptance ratio \cite{lyubashevsky2012lattice}. Furthermore, while true-index responses are conditioned on the norm bound via rejection sampling, decoy responses are drawn from the unconditioned mask distribution. Because the formal zero-knowledge simulation argument assumes the full acceptance ratio and identically distributed responses across all indices, the theoretical anonymity guarantee holds exactly only for the idealized sampler. A production deployment must implement the fully hardened sampler to achieve the stated bounds.
    \item \textbf{Liveness under permissionless phase submission.} Because the \textsf{prep\_ntt} method imposes no sender checks (Section~\ref{sec:onchain}), any account can advance the withdrawal pipeline. While this permissionless design cannot lead to fund theft or redirection (since settlement is strictly bound to the context folded into the challenge chain, and any mismatch causes verification to fail) it does introduce a liveness vulnerability. A malicious third party could race the first phase or squat the transport box, forcing a targeted withdrawal to abort and requiring the user to restart the process. Consequently, the current design prioritizes fund safety over censorship-resistant liveness. Restricting phase submission and box creation exclusively to the proof funder would eliminate this denial-of-service vector.
    \item \textbf{Concrete parameters.} While the ring parameters $(n, q)$ are inherited directly from Falcon-512, the remaining parameters ($\eta$, $\varsigma$, $\beta$, $w$) were selected primarily to ensure correctness and efficient rejection-sampling termination. Because these values have not yet been calibrated against a formal lattice estimator \cite{albrecht2015concrete}, we do not claim a specific bit-security level for the current parameter set.
    \item \textbf{Unaudited implementation.} The TEAL transcription of the lattice arithmetic is validated through extensive mirror testing and live on-chain execution. However, it has not yet been subjected to formal verification or a third-party security audit.
\end{itemize}

\section{Cost Analysis}
\label{sec:eval}
AVM opcode pricing, transaction fees, and storage deposits are deterministic: they depend strictly on protocol constants and the ring size $r$, rather than network congestion. This section develops the resulting closed-form cost model and evaluates the protocol's performance on the Algorand testnet. We explicitly distinguish between exact theoretical costs and empirical measurements. Specifically, the proof transport size (Figure~\ref{fig:proof_transport_size}) and the storage MBRs (Figure~\ref{fig:fees}) are derived in closed form from the protocol constants. In contrast, the remaining metrics including inner \texttt{opup} calls (Figure~\ref{fig:inner_tx}), transaction fees (Figure~\ref{fig:tx_fee}), total wall-clock runtime (Figure~\ref{fig:runtime_total}), local proof generation time (Figure~\ref{fig:proof_time}), local proof size (Figure~\ref{fig:proof_size}), and the detailed latency breakdowns for deposits (Figure~\ref{fig:deposit_breakdown}) and withdrawals (Figure~\ref{fig:withdrawal_breakdown}) are empirical results drawn directly from live executions on the Algorand testnet.

\paragraph{Communication.}
The signature is $32 + 3072r$ bytes and the full transport payload (serial-number transform, ring identifiers, chain seed, responses) is $1056 + 3104r$ bytes, streamed in at most $\lceil (1056 + 3104r)/1960 \rceil$ chunks (Figure~\ref{fig:proof_transport_size}). As detailed in Figure~\ref{fig:proof_size}, the raw proof size scales deterministically with the anonymity set, reaching over 30\,KB at the maximum ring size. This linear growth in $r$ is intrinsic to the AOS/Borromean challenge-chain structure, as in classical LSAG; what changes in the lattice setting is the constant, 3\,104 bytes per member versus 96 for an elliptic-curve instantiation \cite{azimi2026obscura}, which is why box streaming, rather than argument passing, is unavoidable.

\paragraph{Execution.}
Verification costs $9r$ phase groups and $1828r$ inner OpUp calls. As illustrated in Figure~\ref{fig:inner_tx}, this dynamic budget provisioning leads to an $O(r)$ scaling in the total number of inner transactions. Consequently, this translates to about $1.828r$ ALGO of inner fees, plus a fixed overhead of transport and settlement transactions (Figure~\ref{fig:fees}). Furthermore, Figure~\ref{fig:tx_fee} contrasts this linearly scaling withdrawal fee against the constant 0.001 ALGO deposit fee, highlighting the asymmetric cost profile of the protocol. The denomination is derived in the contract from the worst case at $r = r_{\max}$:
\[
    \begin{aligned}
    \textsc{denom} &= \bigl(60 + 1860\,r_{\max}\bigr)\cdot 10^{3}
    + 425{,}300 + 15{,}700 + 10^{6} \\
    &= 20{,}101{,}000\ \text{micro-ALGO}.
    \end{aligned}
\]
where $1860$ over-approximates the per-member OpUps plus phase and reference-carrier transactions, $60$ covers the fixed overhead, the two MBR terms cover the permanent boxes, and $10^{6}$ is slack. The recipient receives $\textsc{denom} - \mathrm{fee} - 441{,}000$ micro-ALGO; the fee is the withdrawer's actual fronted total, bound into $\mu$ and reimbursed to the (self-)relayer.

\paragraph{Storage.}
Table~\ref{tab:storage} itemizes the box MBRs. Permanent state grows by one 1\,024-byte commitment box per deposit ($\approx 0.425$ ALGO locked) and one empty nullifier box per withdrawal ($\approx 0.016$ ALGO); all verification-flow boxes, including the large transport box ($\approx 8.77$ ALGO of MBR at $r = 10$), are deleted at settlement and their deposits refunded. Compared with accumulator-based designs, the flat state model trades Merkle-update computation and state contention for $O(N)$ box storage with a per-deposit constant.

\begin{table}[t]
\centering
\caption{Box storage and minimum-balance accounting (micro-ALGO; MBR $= 2500 + 400(\text{key}+\text{value})$).}
\label{tab:storage}
\begin{tabularx}{\textwidth}{@{}l l l X@{}}
\toprule
\textbf{Box} & \textbf{Value size (B)} & \textbf{MBR} & \textbf{Lifetime} \\
\midrule
commitment $\texttt{"c"}\Vert H(\NTT(C))$ & 1024 & 425{,}300 & permanent, per deposit \\
nullifier $\texttt{"n"}\Vert\snstar$ & 0 & 15{,}700 & permanent, per withdrawal \\
parameters \texttt{pp} & 4096 & 1{,}641{,}700 & permanent, deploy-time \\
transport $\texttt{"p"}\Vert\snstar$ & $1088 + 3104r$ & $2500 + 400(1121 + 3104r)$ & flow, refunded \\
challenge $\texttt{"w"}\Vert\snstar$ & 1024 & 425{,}300 & flow, refunded \\
state $\texttt{"s"}\Vert\snstar$ & 104 & 57{,}300 & flow, refunded \\
marker $\texttt{"v"}\Vert\snstar$ & 8 & 18{,}900 & flow, app-funded, deleted \\
\bottomrule
\end{tabularx}
\end{table}

\begin{figure}[htbp]
\centering
\begin{tikzpicture}
\begin{axis}[
    width=10.64cm,
    height=6.08cm,
    xlabel={Ring Size $r$},
    ylabel={Size (bytes)},
    xmin=1, xmax=10,
    ymin=0, ymax=34000,
    grid=major,
    legend pos=north west,
]
\addplot[
    color=sciMagenta,
    thick,
    mark=*,
    mark options={solid, fill=sciMagenta},
]
coordinates {
(1,3104) (2,6176) (3,9248) (4,12320) (5,15392)
(6,18464) (7,21536) (8,24608) (9,27680) (10,30752)
};
\addlegendentry{Signature $\sigma$ ($32 + 3072r$)}
\addplot[
    color=sciBlue,
    thick,
    dashed,
    mark=square*,
    mark options={solid, fill=sciBlue},
]
coordinates {
(1,4160) (2,7264) (3,10368) (4,13472) (5,16576)
(6,19680) (7,22784) (8,25888) (9,28992) (10,32096)
};
\addlegendentry{Transport payload ($1056 + 3104r$)}
\end{axis}
\end{tikzpicture}
\caption{Proof and transport sizes versus ring size, computed from the wire format of Section~\ref{sec:crypto}. Every payload exceeds the AVM's 2\,048-byte application-argument limit, necessitating box-streamed transport; the transport box adds the serial-number transform, the ring identifiers, and a funder header to the signature itself.}
\label{fig:proof_transport_size}
\end{figure}

\begin{figure}[htbp]
\centering
\begin{tikzpicture}
\begin{axis}[
    width=10.64cm,
    height=6.08cm,
    xlabel={Ring Size $r$},
    ylabel={Fee (ALGO)},
    xmin=1, xmax=10,
    ymin=0, ymax=20,
    grid=major,
    legend pos=north west,
]
\addplot[
    color=sciTeal,
    thick,
    mark=*,
    mark options={solid, fill=sciTeal},
]
coordinates {
(1,1.828) (2,3.656) (3,5.484) (4,7.312) (5,9.140)
(6,10.968) (7,12.796) (8,14.624) (9,16.452) (10,18.280)
};
\addlegendentry{Inner OpUp fees ($1.828r$)}
\addplot[
    color=sciRed,
    thick,
    dashed,
    mark=square*,
    mark options={solid, fill=sciRed},
]
coordinates {
(1,1.92) (2,3.78) (3,5.64) (4,7.50) (5,9.36)
(6,11.22) (7,13.08) (8,14.94) (9,16.80) (10,18.66)
};
\addlegendentry{Fee envelope ($0.06 + 1.86r$)}
\end{axis}
\end{tikzpicture}
\caption{Withdrawal verification fees versus ring size, computed from the AVM's deterministic fee schedule at the 0.001-ALGO minimum fee. The lower curve is the inner OpUp cost of the $9r$ verification phases ($1828$ calls per member); the upper curve is the envelope used to derive the denomination, additionally covering the outer phase, reference-carrier, transport, and settlement transactions. The actual fronted fee is bound into the signature context $\mu$ and reimbursed from the pool at settlement.}
\label{fig:fees}
\end{figure}

\paragraph{Scalability Trade-off.}
Both cost curves are linear in $r$, so the protocol cannot approach the pool-wide anonymity sets of succinct-proof pools. With the three-response wire format, the cap $r_{\max} = 10$ is a \emph{hard structural} limit rather than merely an economic one: the single transport box holds $1088 + 3104r$ bytes, which is $32\,128$ B at $r = 10$ and thus fits within Algorand's $32\,768$-byte per-box ceiling. The on-chain arithmetic is correct for any $r$, and per-member fees plus the growing transport MBR make rings near the cap costly in any case. Raising it would require splitting the transport across multiple boxes (each independently bounded by 32\,768 bytes), a straightforward but currently unimplemented extension. Off-chain signing is $O(r)$ ring operations plus a small expected number of rejection restarts and completes in milliseconds to seconds. As shown in Figure~\ref{fig:proof_time}, the proof generation time remains sub-second across all measured ring sizes, indicating that the computational burden on the client's local prover is negligible compared to the on-chain verification workload. On-chain verification spans $9r + O(1)$ sequential transaction groups and therefore finishes within a correspondingly bounded number of blocks. The wall-clock latencies in Figure~\ref{fig:runtime_total}, broken down in Figures~\ref{fig:deposit_breakdown} and~\ref{fig:withdrawal_breakdown}, are Algorand testnet measurements dominated by transaction submission and network confirmation; they characterize the deployment environment rather than the protocol itself, and we report them for completeness rather than as an intrinsic protocol benchmark.

\subsection{Latency Breakdown}
To better understand the end-to-end performance characteristics of the protocol, the total wall-clock latencies are decomposed into discrete operational stages measured directly within the client application. 

As illustrated in Figure~\ref{fig:deposit_breakdown}, the deposit workflow consists of four sequential stages:
\begin{enumerate}
    \item \textbf{Deposit Detail Generation:} The client requests a new coin from the local prover enclave, which samples the coin secret $(k,s)$, derives the deterministic blinding $e$ (storing the full tuple $(k,s,e)$), and computes the Ring-SIS commitment $C$.
    \item \textbf{Transaction Construction:} The client hashes the commitment to derive the box storage key and constructs the atomic transaction group, pairing the \textsf{deposit} application call with the fixed-denomination payment.
    \item \textbf{Transaction Submission:} The client delegates the transaction group to the wallet provider for cryptographic signing and submits the signed group to the Algorand node's REST API.
    \item \textbf{Network Confirmation:} The client polls the node, waiting for the submitted transaction group to be committed to a block by the Algorand consensus protocol.
\end{enumerate}
Because the deposit phase involves only a constant-time commitment generation and a single atomic transaction group, all latency components remain independent of the ring size $r$.

Conversely, Figure~\ref{fig:withdrawal_breakdown} details the five stages of the withdrawal workflow:
\begin{enumerate}
    \item \textbf{Withdrawal Data Generation:} The client queries the blockchain indexer for recent commitments, selects a recency-biased decoy pool, inserts the user's true commitment, and applies a Fisher--Yates shuffle to form the anonymity set.
    \item \textbf{Proof Generation:} The client delegates the ring and settlement context to the local prover enclave to produce the lattice ring signature.
    \item \textbf{Transaction Construction:} The client calculates the dynamic fee budget and constructs the sequence of transaction groups required for verification: the \textsf{init\_proof} box allocation, the \textsf{put\_proof} chunked payload streaming, the $9r$ \textsf{prep\_ntt} phases with their inner \texttt{opup} calls and auxiliary reference carriers, and the final \textsf{withdraw} settlement group.
    \item \textbf{Transaction Submission:} The client delegates the entire sequence of transaction groups to the wallet provider for signing in a single prompt, and then sequentially submits them to the Algorand node.
    \item \textbf{Network Confirmation:} The client polls the node, waiting for the final \textsf{withdraw} transaction group to be confirmed, which guarantees that the entire verification pipeline succeeded and the funds were settled.
\end{enumerate}
As shown in Figure~\ref{fig:withdrawal_breakdown}, the off-chain stages (Data Generation, Proof Generation, and Transaction Construction) execute in sub-second or near-second time. The linear growth in total withdrawal latency is driven entirely by the on-chain stages (Transaction Submission and Network Confirmation), which must sequentially process the $9r + O(1)$ transaction groups required to overcome the AVM's opcode limits.

\begin{figure}[htbp]
\centering
\begin{tikzpicture}
\begin{axis}[
    width=10.64cm,
    height=6.08cm,
    xlabel={Ring Size},
    ylabel={Number of Inner Transactions},
    xmin=1, xmax=10,
    ymin=0, ymax=20000,
    xtick={1,2,3,4,5,6,7,8,9,10},
    grid=major,
    legend pos=north west,
]

\addplot[
    color=sciTeal,
    thick,
    mark=*,
    mark options={solid, fill=sciTeal},
]
coordinates {
(1,1831.00) (2,3659.00) (3,5487.00) (4,7315.00) (5,9143.00)
(6,10971.00) (7,12799.00) (8,14627.00) (9,16455.00) (10,18283.00)
};
\addlegendentry{Inner Transactions}

\end{axis}
\end{tikzpicture}
\caption{Number of inner transactions issued during the withdrawal phase versus ring size. To overcome the strict per-transaction AVM opcode budget, the verifier splits the NTT-based lattice ring signature check into nine phases per ring member and dynamically provisions execution budget through grouped inner \texttt{opup} calls. Each ring member requires 1828 inner transactions ($4 \times (250 + 175) + 128$), leading to an $O(r)$ scaling in the total number of inner transactions.}
\label{fig:inner_tx}
\end{figure}

\begin{figure}[htbp]
\centering
\begin{tikzpicture}
\begin{axis}[
    width=10.64cm,
    height=6.08cm,
    xlabel={Ring Size},
    ylabel={Transaction Fee (ALGO)},
    xmin=1, xmax=10,
    ymode=log,
    log basis y=10,
    ymin=0.0005, ymax=50,
    xtick={1,2,3,4,5,6,7,8,9,10},
    grid=major,
    legend pos=north west,
]

\addplot[
    color=sciBlue,
    thick,
    mark=*,
    mark options={solid, fill=sciBlue},
]
coordinates {
(1,0.001) (2,0.001) (3,0.001) (4,0.001) (5,0.001)
(6,0.001) (7,0.001) (8,0.001) (9,0.001) (10,0.001)
};
\addlegendentry{Deposit Fee}

\addplot[
    color=sciRed,
    thick,
    dashed,
    mark=square*,
    mark options={solid, fill=sciRed},
]
coordinates {
(1,1.92) (2,3.78) (3,5.64) (4,7.50) (5,9.36)
(6,11.22) (7,13.08) (8,14.94) (9,16.80) (10,18.66)
};
\addlegendentry{Withdrawal Fee}

\end{axis}
\end{tikzpicture}
\caption{Deposit and withdrawal transaction fees as a function of ring size (logarithmic fee axis). The deposit fee remains constant at 0.001 ALGO, since depositing only requires a standard payment transaction and a box allocation for the coin commitment. Conversely, the withdrawal fee scales linearly with the ring size: the on-chain lattice verification provisions its AVM opcode budget through 1828 inner \texttt{opup} calls per ring member ($\approx 1.828$ ALGO each), plus phase and auxiliary reference-carrier transactions, raising the total fee from 1.92 ALGO at $r=1$ to 18.66 ALGO at $r=10$.}
\label{fig:tx_fee}
\end{figure}

\begin{figure}[htbp]
\centering
\begin{tikzpicture}
\begin{axis}[
    width=10.64cm,
    height=6.08cm,
    xlabel={Ring Size},
    ylabel={Time (ms)},
    xmin=1, xmax=10,
    ymode=log,
    log basis y=10,
    ymin=3000, ymax=600000,
    xtick={1,2,3,4,5,6,7,8,9,10},
    grid=major,
    legend pos=north west,
]

\addplot[
    color=sciOrange, 
    thick, 
    mark=*, 
    mark options={solid, fill=sciOrange},
    error bars/.cd, y dir=both, y explicit
]
coordinates {
(1,5181.80) +- (0,498.69) (2,6100.00) +- (0,925.36) (3,5574.00) +- (0,1067.26) 
(4,5955.20) +- (0,800.33) (5,5580.60) +- (0,654.15) (6,6840.00) +- (0,375.66) 
(7,6222.00) +- (0,1036.36) (8,5744.20) +- (0,846.98) (9,5948.20) +- (0,654.81) 
(10,6042.80) +- (0,1129.45)
};
\addlegendentry{Deposit Total Time}

\addplot[
    color=sciPurple, 
    thick, 
    dashed, 
    mark=square*, 
    mark options={solid, fill=sciPurple},
    error bars/.cd, y dir=both, y explicit
]
coordinates {
(1,57967.00) +- (0,1145.33) (2,98075.60) +- (0,1425.97) (3,144000.80) +- (0,3053.66) 
(4,186444.20) +- (0,4123.52) (5,229632.80) +- (0,5538.93) (6,273531.40) +- (0,5510.06) 
(7,327581.20) +- (0,9070.81) (8,374401.60) +- (0,11666.01) (9,421190.00) +- (0,13396.08) 
(10,468469.00) +- (0,15692.68)
};
\addlegendentry{Withdrawal Total Time}

\end{axis}
\end{tikzpicture}
\caption{End-to-end deposit and withdrawal latency across varying ring sizes (logarithmic time axis). The deposit process exhibits a stable latency profile of roughly 5--7 s regardless of the anonymity set size, as its operations are constant-time. In contrast, withdrawal latency grows linearly from about 58 s at $r=1$ to 468 s at $r=10$, driven not by off-chain signing but by the on-chain verification workload: submitting the $9r$ phase groups and their $1828r$ inner \texttt{opup} transactions required to cover the lattice verification opcode cost.}
\label{fig:runtime_total}
\end{figure}

\begin{figure}[htbp]
\centering
\begin{tikzpicture}
\begin{axis}[
    width=10.64cm,
    height=6.08cm,
    xlabel={Ring Size},
    ylabel={Proof Generation Time (ms)},
    xmin=1, xmax=10,
    ymin=0, ymax=800,
    xtick={1,2,3,4,5,6,7,8,9,10},
    grid=major,
    legend pos=north west,
]

\addplot[
    color=sciTeal, 
    thick, 
    mark=*, 
    mark options={solid, fill=sciTeal},
    error bars/.cd, y dir=both, y explicit
]
coordinates {
(1,485.80) +- (0,143.79) (2,446.40) +- (0,144.79) (3,526.00) +- (0,167.54) 
(4,359.20) +- (0,14.34) (5,517.40) +- (0,137.21) (6,513.60) +- (0,133.93) 
(7,481.80) +- (0,145.38) (8,484.00) +- (0,142.23) (9,436.20) +- (0,120.15) 
(10,488.00) +- (0,149.50)
};
\addlegendentry{Proof Generation Time}

\end{axis}
\end{tikzpicture}
\caption{Post-quantum ring signature proof generation time versus ring size. The off-chain local prover executes the AOS/Borromean-style \textsf{Sign} algorithm over the cyclotomic ring $R_q = \mathbb{Z}_q[X]/(X^{512}+1)$, $q = 12289$. Unlike the on-chain verification cost, signing remains sub-second (mean values between 359.20 ms and 526.00 ms) across all measured ring sizes; the run-to-run variation stems mainly from norm-based rejection sampling, which restarts the signing attempt until all responses satisfy the short-vector norm bounds.}
\label{fig:proof_time}
\end{figure}

\begin{figure}[htbp]
\centering
\begin{tikzpicture}
\begin{axis}[
    width=10.64cm,
    height=6.08cm,
    xlabel={Ring Size},
    ylabel={Proof Size (bytes)},
    xmin=1, xmax=10,
    ymin=0, ymax=32000,
    xtick={1,2,3,4,5,6,7,8,9,10},
    grid=major,
    legend pos=north west,
]

\addplot[
    color=sciMagenta, 
    thick, 
    mark=*, 
    mark options={solid, fill=sciMagenta}
]
coordinates {
(1,3104.00) (2,6176.00) (3,9248.00) (4,12320.00) (5,15392.00)
(6,18464.00) (7,21536.00) (8,24608.00) (9,27680.00) (10,30752.00)
};
\addlegendentry{Proof Size}

\end{axis}
\end{tikzpicture}
\caption{Proof size growth as a function of ring size. The packed signature $\sigma = (c_0, \{(z_{k,i}, z_{s,i}, z_{e,i})\}_{i=0}^{r-1})$ comprises a 32-byte initial challenge and, per ring member, three 512-coefficient response polynomials stored at 2 bytes per coefficient, yielding a deterministic size of $3072r + 32$ bytes. Since this payload far exceeds application-argument limits, it is uploaded in chunks into a dedicated proof box, and the smart contract shape-checks the ring size to $r \in [1, 10]$, dictating the practical upper bound of the anonymity set.}
\label{fig:proof_size}
\end{figure}

\begin{figure}[htbp]
\centering
\begin{tikzpicture}
\begin{axis}[
    width=10.64cm,
    height=6.08cm,
    xlabel={Ring Size},
    ylabel={Time (ms)},
    xmin=1, xmax=10,
    ymin=0, ymax=8000,
    xtick={1,2,3,4,5,6,7,8,9,10},
    grid=major,
    legend style={at={(0.5,-0.22)}, anchor=north, legend columns=2},
]

\addplot[
    color=sciBlue, thick, mark=*, mark options={solid, fill=sciBlue},
    error bars/.cd, y dir=both, y explicit
] coordinates {
(1,143.40) +- (0,167.61) (2,328.00) +- (0,6.04) (3,83.60) +- (0,136.63) 
(4,268.00) +- (0,137.09) (5,139.60) +- (0,163.40) (6,144.20) +- (0,169.20) 
(7,326.20) +- (0,4.21) (8,143.00) +- (0,168.45) (9,205.20) +- (0,167.34) 
(10,205.40) +- (0,169.33)
};
\addlegendentry{Deposit Detail Generation}

\addplot[
    color=sciCyan, thick, densely dotted, mark=square*, mark options={solid, fill=sciCyan},
    error bars/.cd, y dir=both, y explicit
] coordinates {
(1,86.60) +- (0,7.99) (2,43.80) +- (0,25.29) (3,47.80) +- (0,29.94) 
(4,40.60) +- (0,13.20) (5,54.40) +- (0,27.37) (6,46.80) +- (0,29.83) 
(7,54.20) +- (0,36.13) (8,41.40) +- (0,17.21) (9,46.40) +- (0,27.20) 
(10,44.60) +- (0,24.32)
};
\addlegendentry{Transaction Construction}

\addplot[
    color=sciOrange, thick, dashed, mark=triangle*, mark options={solid, fill=sciOrange},
    error bars/.cd, y dir=both, y explicit
] coordinates {
(1,1505.80) +- (0,103.78) (2,1538.60) +- (0,102.11) (3,1548.00) +- (0,195.39) 
(4,1432.20) +- (0,110.66) (5,1448.80) +- (0,115.75) (6,1579.60) +- (0,187.31) 
(7,1553.60) +- (0,100.11) (8,1669.60) +- (0,59.85) (9,1699.00) +- (0,151.20) 
(10,1654.80) +- (0,199.41)
};
\addlegendentry{Transaction Submission}

\addplot[
    color=sciPurple, thick, dashdotted, mark=diamond*, mark options={solid, fill=sciPurple},
    error bars/.cd, y dir=both, y explicit
] coordinates {
(1,3445.40) +- (0,368.98) (2,4189.40) +- (0,883.72) (3,3894.00) +- (0,990.74) 
(4,4214.00) +- (0,932.07) (5,3937.00) +- (0,597.05) (6,5068.60) +- (0,454.37) 
(7,4287.40) +- (0,1059.78) (8,3889.80) +- (0,741.96) (9,3997.00) +- (0,699.10) 
(10,4137.40) +- (0,947.51)
};
\addlegendentry{Network Confirmation}

\addplot[
    color=sciRed, ultra thick, mark=x, mark options={solid},
    error bars/.cd, y dir=both, y explicit
] coordinates {
(1,5181.80) +- (0,498.69) (2,6100.00) +- (0,925.36) (3,5574.00) +- (0,1067.26) 
(4,5955.20) +- (0,800.33) (5,5580.60) +- (0,654.15) (6,6840.00) +- (0,375.66) 
(7,6222.00) +- (0,1036.36) (8,5744.20) +- (0,846.98) (9,5948.20) +- (0,654.81) 
(10,6042.80) +- (0,1129.45)
};
\addlegendentry{Total Deposit Time}

\end{axis}
\end{tikzpicture}
\caption{Breakdown of deposit latency versus ring size. The total deposit time is decomposed into four discrete operational stages: deposit detail generation, transaction construction, transaction submission, and network confirmation. Because the deposit phase involves only a constant-time Ring-SIS commitment generation and a standard atomic transaction group with a single box allocation, all latency components remain independent of the ring size $r$, with the total dominated by network confirmation and transaction submission.}
\label{fig:deposit_breakdown}
\end{figure}

\begin{figure}[t]
\centering
\begin{tikzpicture}
\begin{axis}[
    width=10.64cm,
    height=6.08cm,
    xlabel={Ring Size},
    ylabel={Time (ms)},
    xmin=1, xmax=10,
    ymode=log,
    log basis y=10,
    ymin=10, ymax=1000000,
    xtick={1,2,3,4,5,6,7,8,9,10},
    grid=major,
    legend style={at={(0.5,-0.22)}, anchor=north, legend columns=2},
]

\addplot[
    color=sciTeal, thick, mark=*, mark options={solid, fill=sciTeal},
    error bars/.cd, y dir=both, y explicit
] coordinates {
(1,511.40) +- (0,188.64) (2,716.20) +- (0,159.24) (3,769.40) +- (0,137.73) 
(4,1125.00) +- (0,493.73) (5,993.80) +- (0,502.05) (6,795.60) +- (0,174.11) 
(7,887.80) +- (0,178.17) (8,1030.20) +- (0,194.90) (9,1008.20) +- (0,138.74) 
(10,1186.60) +- (0,524.01)
};
\addlegendentry{Withdrawal Data Generation}

\addplot[
    color=sciBlue, thick, densely dotted, mark=square*, mark options={solid, fill=sciBlue},
    error bars/.cd, y dir=both, y explicit
] coordinates {
(1,485.80) +- (0,143.79) (2,446.40) +- (0,144.79) (3,526.00) +- (0,167.54) 
(4,359.20) +- (0,14.34) (5,517.40) +- (0,137.21) (6,513.60) +- (0,133.93) 
(7,481.80) +- (0,145.38) (8,484.00) +- (0,142.23) (9,436.20) +- (0,120.15) 
(10,488.00) +- (0,149.50)
};
\addlegendentry{Proof Generation}

\addplot[
    color=sciMagenta, thick, dashed, mark=triangle*, mark options={solid, fill=sciMagenta},
    error bars/.cd, y dir=both, y explicit
] coordinates {
(1,38.40) +- (0,4.22) (2,39.20) +- (0,5.97) (3,44.40) +- (0,12.52) 
(4,56.00) +- (0,16.78) (5,47.60) +- (0,6.58) (6,50.40) +- (0,20.01) 
(7,60.60) +- (0,35.04) (8,199.60) +- (0,319.55) (9,92.80) +- (0,31.35) 
(10,67.40) +- (0,27.01)
};
\addlegendentry{Transaction Construction}

\addplot[
    color=sciOrange, thick, dashdotted, mark=diamond*, mark options={solid, fill=sciOrange},
    error bars/.cd, y dir=both, y explicit
] coordinates {
(1,51535.20) +- (0,1116.61) (2,91476.80) +- (0,1432.24) (3,137268.20) +- (0,3011.71) 
(4,179521.00) +- (0,3978.87) (5,222678.60) +- (0,5555.00) (6,266790.80) +- (0,5525.00) 
(7,320762.00) +- (0,9081.59) (8,367311.60) +- (0,11724.33) (9,414298.60) +- (0,13381.43) 
(10,461330.40) +- (0,15627.16)
};
\addlegendentry{Transaction Submission}

\addplot[
    color=sciPurple, thick, mark=pentagon*, mark options={solid, fill=sciPurple},
    error bars/.cd, y dir=both, y explicit
] coordinates {
(1,5366.20) +- (0,16.38) (2,5362.80) +- (0,44.70) (3,5359.00) +- (0,19.27) 
(4,5350.40) +- (0,23.48) (5,5364.00) +- (0,14.76) (6,5349.60) +- (0,10.43) 
(7,5359.40) +- (0,10.90) (8,5347.20) +- (0,39.00) (9,5324.20) +- (0,23.15) 
(10,5362.80) +- (0,5.07)
};
\addlegendentry{Network Confirmation}

\addplot[
    color=sciGray, ultra thick, dashed, mark=x, mark options={solid},
    error bars/.cd, y dir=both, y explicit
] coordinates {
(1,57967.00) +- (0,1145.33) (2,98075.60) +- (0,1425.97) (3,144000.80) +- (0,3053.66) 
(4,186444.20) +- (0,4123.52) (5,229632.80) +- (0,5538.93) (6,273531.40) +- (0,5510.06) 
(7,327581.20) +- (0,9070.81) (8,374401.60) +- (0,11666.01) (9,421190.00) +- (0,13396.08) 
(10,468469.00) +- (0,15692.68)
};
\addlegendentry{Total Withdrawal Time}

\end{axis}
\end{tikzpicture}
\caption{Breakdown of withdrawal latency versus ring size (logarithmic time axis). The off-chain stages (withdrawal data generation, proof generation, and transaction construction) all remain at or below roughly one second irrespective of the ring size. The end-to-end time is instead dominated by transaction submission, which grows linearly from about 52 s at $r=1$ to 461 s at $r=10$ as the client submits the $9r$ on-chain verification phase groups carrying $1828r$ inner \texttt{opup} transactions, while the final network confirmation wait stays constant at roughly 5.3 s.}
\label{fig:withdrawal_breakdown}
\end{figure}

\section{Conclusion}
\label{sec:concl}
We presented Obscura-PQ, a decentralized, non-custodial privacy protocol that demonstrates the practical feasibility of post-quantum cryptography on highly constrained public blockchains. By evaluating all relations in the number-theoretic-transform (NTT) domain, splitting table-driven NTTs across opcode-pooled execution phases, and leveraging box-streamed proof transport, we successfully deployed a setup-free lattice linkable ring signature natively on the Algorand testnet. Our construction binds a Ring-SIS commitment to a deterministic Ring-LWE serial number using an AOS/Borromean-style challenge chain, achieving double-spend soundness, unforgeability, and anonymity in the classical random-oracle model without relying on a global Merkle accumulator.

Several important directions remain for future work. On the cryptographic front, a formal security analysis in the quantum-accessible random-oracle model (QROM) \cite{don2019security} and the calibration of concrete bit-security parameters using formal lattice estimators \cite{albrecht2015concrete} are necessary next steps. Furthermore, a production deployment requires a fully hardened, constant-time discrete Gaussian sampler with the complete Fiat--Shamir-with-aborts acceptance ratio \cite{lyubashevsky2012lattice}. On the protocol side, exploring sublinear post-quantum accumulators or logarithmic ring signatures \cite{beullens2020calamari,ben2018scalable} could enable significantly larger anonymity sets, provided their verifiers remain compatible with our phase-decomposed execution model. Additionally, integrating confidential amounts in the style of MatRiCT \cite{esgin2019matrict} and establishing a third-party relayer network utilizing our signature-bound relayer role would greatly expand the protocol's utility. Finally, formal verification of the TEAL phase machine and independent security audits remain critical prerequisites for mainnet deployment.

\section*{Code Availability}
To facilitate reproducibility and public auditing, the complete source code is available at \url{https://github.com/n-azimi/Obscura-PQ}. The repository includes the PyTeal smart contract, local lattice prover, React client, test suites, and ledger-analysis utilities. All empirical measurements reported in Section~\ref{sec:eval} were derived directly from Algorand testnet executions of this implementation.

\appendix
\renewcommand{\theHsection}{appendix.\Alph{section}}
\section{Formal Definitions and Security Arguments}
\label{app:formal}
This appendix formalizes the lattice linkable ring signature underlying Obscura-PQ and relates each security property to the assumptions of Section~\ref{sec:prelim}. The arguments follow established analysis templates for Fiat--Shamir-with-aborts sigma protocols and commitment-based linkable ring signatures \cite{lyubashevsky2012lattice,esgin2019lattice,alberto2018post,yuen2021dualring}. We present the reductions, indicating precisely where each assumption enters, and we flag the points at which the implemented scheme deviates from the analyzed abstraction (Remark~\ref{rem:gaps}). All reductions are in the classical random-oracle model. We are deliberately conservative about what is proven: the extractor recovers a \emph{relaxed} opening rather than an exact one (Theorem~\ref{thm:unforgeability}). The anonymity argument reduces to decisional Ring-LWE \emph{in the random-oracle model}: the serial number $\sn = a_3 k + e$ has the form of a Ring-LWE sample, and modelling $H_{\mathrm{short}}$ as a random oracle makes its \emph{deterministic} error $e = H_{\mathrm{short}}(k)$ behave as an independent short error unless the adversary queries the oracle at the secret $k$ (an event bounded by key recovery). We flag this self-correlated-noise caveat explicitly in Remark~\ref{rem:gaps}.

\subsection{Notation}
Table~\ref{tab:notation} fixes notation. Throughout, ``short'' means bounded Euclidean norm, and all hash calls are domain-separated instances of SHA-256 modeled as random oracles.

\begin{table}[htbp]
\centering
\caption{Summary of notation.}
\label{tab:notation}
\begin{tabularx}{\textwidth}{@{}l X@{}}
\toprule
\textbf{Symbol} & \textbf{Description} \\
\midrule
$\Rq$, $n$, $q$ & quotient ring $\mathbb{Z}_q[X]/(X^n+1)$, $n = 512$, $q = 12289$ \\
$a_1, a_2, a_3, a_4$ & nothing-up-my-sleeve public parameters in $\Rq$ \\
$(k, s, e)$, $C = a_1 k + a_2 s + a_4 e$ & coin opening (short; $e = H_{\mathrm{short}}(k)$) and rank-one Ring-SIS coin commitment \\
$\sn = a_3 k + e$, $\snstar$ & serial number (key image) and its SHA-256 nullifier digest \\
$\mathbf{C}$, $r$, $\pi$ & ring of commitments, ring size, secret signer index \\
$\mu$ & context: recipient $\Vert$ relayer $\Vert$ fee $\Vert$ appID (80 bytes) \\
$\mathcal{C}$, $w$ & sparse ternary challenge space, Hamming weight $48$ \\
$c_i$, $\chi_i = \HtoC(c_i)$ & 256-bit chain values and derived ternary challenges \\
$y_k, y_s, y_e$; $z_{k,i}, z_{s,i}, z_{e,i}$ & Gaussian masks (width $\varsigma = 2500$); per-member responses \\
$\beta^2$ & squared-norm acceptance bound, $4.2 \times 10^{9}$ \\
$\sigma = (c_0, \{(z_{k,i}, z_{s,i}, z_{e,i})\}_i)$ & signature \\
$\mathcal{B}_C$, $\mathcal{B}_N$ & commitment and nullifier box sets \\
$\hat{x} = \NTT(x)$, $\odot$ & frequency-domain image and pointwise product \\
\bottomrule
\end{tabularx}
\end{table}

\subsection{Hardness Assumptions and the Commitment}
\label{app:assumptions}
We restate the assumptions of Section~\ref{sec:prelim} in the precise forms the reductions use. Throughout, the public parameters $a_1, a_2, a_3, a_4$ are treated as uniform in $\Rq$; this is the nothing-up-my-sleeve heuristic, justified in the ROM by modeling $H_{\mathrm{ring}}$ as a (programmable) random oracle, so a reduction may embed a challenge instance in the $a_j$ by reprogramming $H_{\mathrm{ring}}$. ``Short'' means squared Euclidean norm at most a stated bound.

\begin{assumption}[Ring-SIS, binding; B1]
\label{ass:sis}
For a vector $\mathbf{a}$ of uniform elements in $\Rq$ of length 3, and a norm bound $B = O(\beta)$, no PPT algorithm outputs a nonzero short vector $\mathbf{f}$ such that $\langle \mathbf{a}, \mathbf{f}\rangle = 0$ except with negligible probability. In particular, the coin $C = a_1 k + a_2 s + a_4 e$ is \emph{binding}: a second short opening $(k',s',e') \neq (k,s,e)$ yields the collision $a_1(k-k') + a_2(s-s') + a_4(e-e') = 0$, which is a Ring-SIS solution for $\mathbf{a} = (a_1, a_2, a_4)$.
\end{assumption}

\begin{assumption}[Ring-LWE, commitment and serial-number hiding; B2]
\label{ass:lwe}
For uniform $a$ and short $x, y$, the sample $(a,\, a x + y)$ is computationally indistinguishable from uniform on $\Rq \times \Rq$ (decisional Ring-LWE \cite{lyubashevsky2010ideal}). Two consequences are used. (i) The coin $C = a_1 k + a_2 s + a_4 e$ can be rewritten as $a_2\bigl((a_2^{-1}a_1)k + (a_2^{-1}a_4)e + s\bigr)$ whenever $a_2$ is invertible. Since $a_1, a_4$ are uniform, $a_2^{-1}a_1$ and $a_2^{-1}a_4$ are uniform, making the inner term a rank-2 Module-LWE sample with secret $(k,e)^T$ and error $s$. Thus, $C$ is computationally indistinguishable from uniform and hides its opening. (ii) The serial number $\sn = a_3 k + e$ has the form of a Ring-LWE sample, but with the \emph{deterministic} error $e = H_{\mathrm{short}}(k)$. Modelling $H_{\mathrm{short}}$ as a random oracle with short outputs, for any adversary that does not query the oracle at $k$ the value $e$ is a uniform short element independent of $k$; then $\sn$ is distributed exactly as a genuine Ring-LWE sample and, jointly with $C$ (which reuses the same $e$), the pair is distributed as in the independent-error variant. Under this modelling $\sn$ hides $k$, recovering $k$ from $\sn$ is search Ring-LWE (equivalently bounded-distance decoding of $a_3 k + e = \sn$ with short $(k,e)$), and the residual event that the adversary queries $H_{\mathrm{short}}(k)$ is itself a key recovery, bounded by the same search hardness (Remark~\ref{rem:gaps}).
\end{assumption}

\begin{assumption}[Serial-number linking; decisional Ring-LWE in the ROM]
\label{ass:link}
For uniform $a_1, a_2, a_3, a_4$ and independent short $k, k', s, e, e'$, the distributions
\[
    \bigl(a_1 k + a_2 s + a_4 e,\; a_3 k + e\bigr)
    \quad\text{and}\quad
    \bigl(a_1 k + a_2 s + a_4 e,\; a_3 k' + e'\bigr)
\]
are computationally indistinguishable. Equivalently, given a commitment and a candidate serial number, no PPT algorithm decides whether they share the committed secret $k$, except with negligible advantage. For \emph{independent} $e, e'$ this is a direct consequence of Assumption~\ref{ass:lwe}; the deployed scheme instead uses the \emph{deterministic} $e = H_{\mathrm{short}}(k)$, for which the statement holds in the random-oracle model, as discussed below and in Remark~\ref{rem:gaps}.
\end{assumption}

\noindent The blinded serial number $\sn = a_3 k + e$ carries a short error, so the linking statement~\ref{ass:link} holds \emph{in the random-oracle model}: modelling $H_{\mathrm{short}}$ as a random oracle, the joint view $(a_1 k + a_2 s + a_4 e,\; a_3 k + e)$ is, up to the negligible probability that the adversary queries $H_{\mathrm{short}}(k)$, distributed exactly as in the \emph{independent-error} variant, for which replacing $\sn$ by a uniform element (one decisional Ring-LWE hop) and reintroducing an independent $a_3 k' + e'$ (a second hop) makes the two distributions identical. Because $e = H_{\mathrm{short}}(k)$ is \emph{not} an independent error, this is a random-oracle argument, \emph{not} a black-box reduction to standard decisional Ring-LWE; without the ROM, Assumption~\ref{ass:link} should be read as a self-correlated-noise Ring-LWE assumption in its own right (Remark~\ref{rem:gaps}), and the clean standard-model reduction is recovered only by the alternative that samples $e$ independently and stores it in the coin. This is the lattice analogue of the key-image indistinguishability relied on in the commitment-and-serial-number line \cite{alberto2018post,esgin2019matrict,yuen2021dualring}; Section~\ref{sec:prelim} folds both uses of Ring-LWE into the single assumption B2. Recovering the short $k$ from $\sn$ is the search Ring-LWE problem and is likewise covered by Assumption~\ref{ass:lwe}.

\subsection{Scheme Syntax and Correctness}

\begin{definition}[Lattice Linkable Ring Signature]
\label{def:llrs}
Relative to public parameters $\mathrm{pp} = (a_1,a_2,a_3,a_4)$ and a random oracle $\mathcal{H}$, the scheme is the tuple of PPT algorithms $(\CoinKeyGen, \SerNum, \Sig, \Ver, \Link)$:
\begin{itemize}
    \item $\bigl((k,s,e),\, C\bigr) \leftarrow \CoinKeyGen(1^\lambda)$: sample short $k, s$ from the centred binomial distribution, derive the short blinding $e = H_{\mathrm{short}}(k)$, and output the coin $C = a_1 k + a_2 s + a_4 e$;
    \item $\sn \leftarrow \SerNum(k)$: output the deterministic serial number $\sn = a_3 k + e$ with $e = H_{\mathrm{short}}(k)$;
    \item $(\sigma, \sn) \leftarrow \Sig\bigl((k,s,e), \mathbf{C}, \pi, \mu\bigr)$: on input an opening $(k,s,e)$ of $C_\pi \in \mathbf{C}$ and a message $\mu$, output a signature and the serial number, per Section~\ref{sec:sign};
    \item $\{0,1\} \leftarrow \Ver(\mathbf{C}, \mu, \sn, \sigma)$: verify the norm bounds, recompute the challenge chain, and accept iff it closes, per Section~\ref{sec:verify};
    \item $\Link$: two verifying tuples are linked iff their serial-number digests $\snstar$ coincide.
\end{itemize}
\end{definition}

\begin{theorem}[Correctness]
\label{thm:correctness}
For every ring $\mathbf{C}$, index $\pi$, and message $\mu$, an honestly generated signature is accepted by $\Ver$ except with negligible probability over the signer's randomness, and honest signing terminates after an expected $O(1)$ rejection restarts.
\end{theorem}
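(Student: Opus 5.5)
The proof splits into two parts: \emph{algebraic closure} of the challenge chain, which is deterministic once signing terminates, and a \emph{probabilistic} part, which covers the norm bounds and the expected number of restarts.

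For closure, I would first show that at every decoy index $i \neq \pi$ the signer computes $\widehat{t^{\mathrm{pk}}_i}$, $\widehat{t^{\mathrm{sn}}_i}$ and $c_{i+1}$ exactly as $\Ver$ does. Both sides use the same responses, the same $\chi_i = \HtoC(c_i)$ and the same frequency-domain formulas~\eqref{eq:freq}, so the chain values agree by induction. At the true index I would substitute $z_{x,\pi} = y_x + \chi_\pi x$ and $C_\pi = a_1 k + a_2 s + a_4 e$ into the relations~\eqref{eq:relations}. The $\chi_\pi$ terms cancel, leaving $t^{\mathrm{pk}}_\pi = a_1 y_k + a_2 y_s + a_4 y_e$. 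Likewise, using $\sn = a_3 k + e$, we get $t^{\mathrm{sn}}_\pi = a_3 y_k + y_e$. These are exactly the commitments the signer hashed in step~1. By linearity and bijectivity of $\NTT$ the packed images coincide, so the verifier recomputes the same $c_{\pi+1}$. Going around the chain from index $0$, the verifier's final value equals $c_0$. One subtlety needs care: the signer works over the integers but packs the responses as signed 16-bit coefficients. I would check that $\lVert z\rVert^2 \le \beta^2$ together with the acceptance test implies every coefficient fits in the range $[-2^{15}, 2^{15})$; otherwise the tail event is charged to the negligible failure probability. Reduction modulo $q$ then commutes with the relations.

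For the norm checks, the decoy responses are drawn from $D_\varsigma$, so I would bound their failure probability with a standard Gaussian tail bound. Taking $\lVert z\rVert^2 \le \beta^2$ with $\beta^2 = 4.2\times10^9$ against the mean $n\varsigma^2 = 3.2\times10^9$, a chi-square concentration (Laurent--Massart) bound gives failure probability $e^{-\Omega(n)}$ per polynomial. A union bound over the $3r \le 30$ decoy responses keeps this negligible. At the true index, signing explicitly restarts when a norm fails, so an output signature always passes there.

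For termination, I would bound $\lVert \chi_\pi x\rVert$ for $x \in \{k,s,e\}$. Each coefficient of $\chi_\pi x$ is a sum of at most $w = 48$ terms of size at most $\eta = 2$, so $\lVert \chi_\pi x\rVert_\infty \le 96$ and $\lVert \chi_\pi x\rVert \le 96\sqrt{512}$, with a better bound of about $\sqrt{w}\,\lVert x\rVert$ with high probability. Since $\varsigma = 2500$ is a sufficiently large multiple of this, Lyubashevsky's rejection-sampling lemma~\cite{lyubashevsky2012lattice} gives a constant $M$ per response. The product over the three responses therefore has a constant $M^3$. Each attempt is then accepted with probability at least $(1 - \mathrm{negl})/M$, after subtracting the negligible norm-failure mass, and the expected number of restarts is $O(1)$, a geometric random variable.

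I expect the main obstacle to be the termination constant. The per-response bound $\lVert \chi_\pi x\rVert$ must be made quantitatively small relative to $\varsigma$, so that $M$ is a modest constant rather than merely bounded. This needs a high-probability bound on $\lVert x\rVert$ for centred-binomial $x$, and on $\lVert e\rVert$ for the hashed $e$, treated as a random oracle. The implemented sampler deviates from the analyzed one, using a rounded Gaussian with a norm-only test. I would prove the statement for the idealized sampler and note that the norm-only variant restarts even less often.
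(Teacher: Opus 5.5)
Your proposal is correct and follows the paper's proof: it uses the same cancellation of the $\chi_\pi$ terms at the true index, the same observation that only the unconditioned decoy responses can fail the norm check (handled by a Gaussian tail plus union bound, where your Laurent--Massart bound makes the paper's ``five standard deviations'' remark precise), and the same $\lVert \chi_\pi x \rVert_\infty \le w\eta = 96$ estimate feeding Lyubashevsky's constant-$M$ rejection lemma. One small correction: $\beta \approx 6.5 \times 10^{4} > 2^{15}$, so the norm bound does not itself force coefficients into signed 16 bits, and it is your fallback (a roughly $13\varsigma$ Gaussian tail, negligible) that actually settles the packing subtlety, which the paper's proof does not address.
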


\begin{proof}
At the true index, substituting the closure $z_{k,\pi} = y_k + \chi_\pi k$, $z_{s,\pi} = y_s + \chi_\pi s$, $z_{e,\pi} = y_e + \chi_\pi e$ into~\eqref{eq:relations} gives
\begin{align*}
t^{\mathrm{pk}}_\pi &= a_1(y_k + \chi_\pi k) + a_2(y_s + \chi_\pi s) + a_4(y_e + \chi_\pi e) - \chi_\pi C_\pi = a_1 y_k + a_2 y_s + a_4 y_e, \\
t^{\mathrm{sn}}_\pi &= a_3(y_k + \chi_\pi k) + (y_e + \chi_\pi e) - \chi_\pi\,(a_3 k + e) = a_3 y_k + y_e ,
\end{align*}
which matches the prover's commitment step; at every decoy index the verifier recomputes the same relation images from the same responses and challenges. Every chain value therefore coincides with the prover's and the chain closes to $c_0$, \emph{provided} every response passes the norm check. The true-index responses pass by construction (rejection sampling). The decoy responses are drawn from $D_\varsigma$ and, in the implemented signer, are \emph{not} conditioned on the bound; a single decoy coefficient vector violates $\lVert \cdot \rVert^2 \le \beta^2$ only with probability negligible in $n$ (see the termination estimate below), so a union bound over the $3(r-1) \le 3 r_{\max}$ decoy responses leaves acceptance overwhelming. This is why the statement is ``except with negligible probability'' rather than perfect; a signer that also norm-checks its decoys would make correctness perfect.

For termination: each coefficient of $\chi_\pi k$ and $\chi_\pi s$ is a signed sum of at most $w$ terms of magnitude at most $\eta$, so $\lVert \chi_\pi k \rVert_\infty \le w\eta = 96 \ll \varsigma$ and $\lVert z_{\bullet,\pi}\rVert^2$ concentrates around the mask mean $n\varsigma^2 = 3.2 \times 10^{9}$. The acceptance bound $\beta^2 = 4.2 \times 10^{9}$ exceeds this mean by roughly five standard deviations of $\lVert z \rVert^2$ (whose standard deviation is $\varsigma^2\sqrt{2n} \approx 2 \times 10^{8}$). The discrete Gaussian rejection sampling completes in expected $M$ attempts, where $M \approx 3$ is chosen based on the worst-case norm of the secret offset, yielding an expected $O(1)$ restarts \cite{lyubashevsky2012lattice}. (The deployed implementation simplifies this by omitting the probability ratio; see Section~\ref{sec:limitations} and Remark~\ref{rem:gaps}.) \qed
\end{proof}

\begin{lemma}[Challenge Space]
\label{lem:challenge}
$\lvert \mathcal{C} \rvert = \binom{512}{48} \cdot 2^{48} > 2^{256}$, and in the ROM each distinct chain input maps to an independent uniform 256-bit value $c_i$; hence a collision among any polynomial number $Q$ of oracle queries, or the prediction of an unqueried chain value, occurs with probability at most $O(Q^2 2^{-256})$ and $Q\,2^{-256}$ respectively, both negligible.
\end{lemma}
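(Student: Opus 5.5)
The lemma makes three claims, and I will prove them in order: the size of $\mathcal{C}$ and the bound $2^{256}$, uniformity of chain values in the ROM, and the two probability bounds.

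\emph{Counting.} An element of $\mathcal{C}$ is fixed by a support set of $w=48$ positions among $n=512$, together with an independent sign for each nonzero coefficient. Distinct (support, sign) pairs give distinct polynomials, so $\lvert\mathcal{C}\rvert=\binom{512}{48}2^{48}$ exactly. For the lower bound I will use $\binom{n}{w}\ge (n/w)^w$:
\[
\binom{512}{48}\ge\left(\tfrac{512}{48}\right)^{48}>10^{48}>2^{159}.
\]
Hence $\lvert\mathcal{C}\rvert>2^{159+48}=2^{207}$. This falls short of $2^{256}$, so the crude bound is not enough and I will sharpen the estimate. I will use $\log_2\binom{n}{w}\approx nH(w/n)-\tfrac12\log_2(2\pi w(1-w/n))$ with $w/n=0.09375$. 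Then $H(0.09375)\approx 0.449$, giving $nH\approx 230$ and $\log_2\binom{512}{48}\approx 227$. Adding the 48 sign bits gives about $275>256$.

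For a fully rigorous check I will instead bound the product $\prod_{j=0}^{47}(512-j)/(48-j)$ term by term. Each factor is at least $465/48$, and the small-$j$ factors are far larger, so grouping the factors gives a clean lower bound. This sharpened estimate is the only delicate step, because the elementary bound alone fails.

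\emph{Uniformity.} In the ROM, $\mathcal{H}$ is a uniformly random function $\{0,1\}^*\to\{0,1\}^{256}$. Its outputs on distinct inputs are therefore independent and uniform. The chain input also contains the index $i{+}1 \bmod r$, the context $\mu$ and the packed images, so it is domain-separated from the other oracle uses. Each $c_i$ is consequently an independent uniform 256-bit string. Note that $\HtoC$ maps into $\mathcal{C}$ deterministically, so it is the seed $c_i$, not $\chi_i$, that carries the entropy.

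\emph{Probability bounds.} For $Q$ queries on distinct inputs, a union bound over the $\binom{Q}{2}$ pairs, each colliding with probability $2^{-256}$, gives a collision probability of at most $Q^2 2^{-257}=O(Q^22^{-256})$. For prediction, the adversary's output equals an unqueried value with probability $2^{-256}$ per attempt. Taking a union bound over at most $Q$ target positions or attempts gives at most $Q\,2^{-256}$. For polynomial $Q$ both bounds are negligible.
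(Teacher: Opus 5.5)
Your proposal is correct and follows the paper's proof essentially step for step: a direct count of supports and signs, independence of random-oracle outputs on distinct inputs (with the entropy residing in the seed $c_i$ rather than in $\chi_i=\HtoC(c_i)$), and the standard birthday and guessing bounds; you in fact give more numerical detail than the paper, which simply asserts $\log_2\binom{512}{48}+48>256$. Two minor corrections to that detail: $\log_2\binom{512}{48}\approx 225.8$ rather than $227$ (total $\approx 273.8$), and the factors $(512-j)/(48-j)$ increase with $j$ rather than decrease, so a clean rigorous finish is $\binom{512}{48}\ge 465^{48}/48! > 2^{425.3-203} > 2^{222}$, comfortably above the required $2^{208}$.
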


\begin{proof}
The count is direct: a challenge is determined by the choice of $w = 48$ positions out of $n = 512$ and a sign per position, and $\log_2\binom{512}{48} + 48 > 256$. Chain values are outputs of distinct random-oracle queries, since the 2-byte index and the packed relation images $\pack(\widehat{t^{\mathrm{pk}}_i}) \Vert \pack(\widehat{t^{\mathrm{sn}}_i})$ separate the inputs; the collision and prediction bounds are then the standard birthday and guessing bounds. The seed $c_i$ carries at most $256$ bits of entropy, so $\HtoC(c_i)$ realizes at most $2^{256}$ of the $\lvert\mathcal{C}\rvert$ challenges, which is the relevant unpredictability parameter. \qed
\end{proof}

\subsection{Security Experiments}
\label{app:games}
We fix the experiments the theorems refer to. In each, the challenger generates $\mathrm{pp}$, answers random-oracle queries by lazy sampling, and maintains a list $L$ of honest coins created through the oracle
$\mathcal{O}^{\mathrm{coin}}$ (runs $\CoinKeyGen$, stores $(k,s,e)$, returns $C$),
a corruption oracle $\mathcal{O}^{\mathrm{corrupt}}(j)$ (returns the opening of the $j$-th honest coin and marks it corrupted),
and a signing oracle $\mathcal{O}^{\mathrm{sign}}(j, \mathbf{C}, \mu)$ (aborts unless $C_j \in \mathbf{C}$; otherwise returns $\Sig$ with the $j$-th opening at its index in $\mathbf{C}$).

\begin{definition}[Unforgeability / Theft Resistance]
\label{def:uf}
$\mathcal{A}$ wins $\mathsf{Exp}^{\mathrm{uf}}$ if, given $\mathrm{pp}$ and the three oracles, it outputs a verifying tuple $(\mathbf{C}^\ast, \mu^\ast, \sn^\ast, \sigma^\ast)$ such that (i) every $C_i \in \mathbf{C}^\ast$ is an honest, uncorrupted coin in $L$, and (ii) $\sn^\ast$ was never returned by $\mathcal{O}^{\mathrm{sign}}$. The scheme is unforgeable if $\Pr[\mathcal{A}\text{ wins}]$ is negligible for every PPT $\mathcal{A}$.
\end{definition}

\begin{definition}[Anonymity / Signer Ambiguity]
\label{def:anon}
For $b \in \{0,1\}$, $\mathsf{Exp}^{\mathrm{anon}}_b$ runs $\mathcal{A}$ with $\mathrm{pp}$ and the three oracles; $\mathcal{A}$ outputs $(\mathbf{C}, \mu, i_0, i_1)$ where $C_{i_0}, C_{i_1} \in \mathbf{C}$ are honest and \emph{never queried} to $\mathcal{O}^{\mathrm{corrupt}}$ or $\mathcal{O}^{\mathrm{sign}}$. The challenger returns $(\sigma, \sn) \leftarrow \Sig$ with the opening of $C_{i_b}$, and $\mathcal{A}$ outputs a guess $b'$. The advantage is $\bigl\lvert \Pr[b'=b] - \tfrac12 \bigr\rvert$. The non-corruption/non-signing restriction on $i_0,i_1$ is intrinsic to linkable ring signatures: two spends of one coin publish the identical serial number and are, by design, linkable.
\end{definition}

\subsection{Unforgeability}
\label{app:uf}

\begin{theorem}[Unforgeability / Theft Resistance]
\label{thm:unforgeability}
In the ROM, under Assumption~\ref{ass:sis} (with $H_{\mathrm{ring}}$ programmable) and Assumption~\ref{ass:lwe}, no PPT adversary wins $\mathsf{Exp}^{\mathrm{uf}}$ except with negligible probability.
\end{theorem}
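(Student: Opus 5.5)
Throughout, I treat the honest coins in $L$ as embedding a Ring-SIS instance, and I use rewinding on the random oracle (the forking lemma) to extract from the forgery a relaxed opening of some honest coin. Combined with the honestly known opening of that coin, this gives a Ring-SIS solution.

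\textbf{Step 1 (simulation).} I first replace $\mathcal{O}^{\mathrm{sign}}$ by a simulator that never uses the opening. It runs the decoy procedure at every index: it picks a fresh $c_0$, samples all responses from the accepted-response distribution, and programs $\mathcal{H}$ at the single closing input. By the rejection-sampling argument, the true-index responses are statistically close to the simulated ones. Reprogramming fails only if the closing input was already queried. This input contains the freshly masked image $\widehat{t^{\mathrm{pk}}}$, which has high min-entropy, so the failure probability is negligible by Lemma~\ref{lem:challenge}. For $\sn$, the simulator still uses the true $a_3k+e$, because the reduction knows every opening. So no hiding assumption is needed here: Assumption~\ref{ass:lwe} enters only to argue that the coins leak nothing that changes the forger's success (Module-LWE hiding of $C$), and I may not even need that.

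\textbf{Step 2 (forking).} Consider the verifying forgery $(\mathbf{C}^\ast,\mu^\ast,\sn^\ast,\sigma^\ast)$. Since the chain closes, some index $i$ must have its query $c_{i+1}=\mathcal{H}(\ldots\widehat{t^{\mathrm{pk}}_i}\Vert\widehat{t^{\mathrm{sn}}_i}\ldots)$ made before the query that outputs $c_i$, cyclically. Otherwise the adversary would have predicted an oracle output, which Lemma~\ref{lem:challenge} rules out. Fix the first such "closing" query, rewind to the query defining $c_i$, and answer it afresh. By the general forking lemma, with non-negligible probability I obtain a second forgery that shares the images $t^{\mathrm{pk}}_i,t^{\mathrm{sn}}_i$ but has $\chi_i'\neq\chi_i$. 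Subtracting the two sets of relations gives $\bar z=(z-z')$ with $\|\bar z\|\le 2\beta$ and $\bar\chi=\chi_i-\chi_i'$ satisfying
\[
a_1\bar z_k+a_2\bar z_s+a_4\bar z_e=\bar\chi C_i,\qquad a_3\bar z_k+\bar z_e=\bar\chi\,\sn^\ast .
\]

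\textbf{Step 3 (Ring-SIS).} $C_i$ is an honest uncorrupted coin, so the reduction knows its opening $(k,s,e)$. This gives
\[
a_1(\bar z_k-\bar\chi k)+a_2(\bar z_s-\bar\chi s)+a_4(\bar z_e-\bar\chi e)=0,
\]
whose entries are short: their norm is at most $2\beta+w\eta\sqrt n$, which is within $B=O(\beta)$. This is a Ring-SIS solution for $(a_1,a_2,a_4)$, provided it is nonzero. The instance is embedded by programming $H_{\mathrm{ring}}$.

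\textbf{Main obstacle: nontriviality.} I must show that $\bar z\neq\bar\chi\cdot(k,s,e)$ with non-negligible probability. Here I use that the forged $\sn^\ast$ was never output by $\mathcal{O}^{\mathrm{sign}}$. If equality held, the second relation would give $\bar\chi(a_3k+e)=\bar\chi\sn^\ast$, which is weaker than $\sn^\ast=a_3k+e$ because $\bar\chi$ may be a zero divisor. So the linkage needs care. The approach I would try first is the standard one: argue that the simulated view is statistically (or, by Assumption~\ref{ass:lwe}, computationally) independent of which of the many short openings of $C_i$ the reduction holds. Guessing the index $i$ then costs a factor $1/r$, and the extracted $\bar z$ coincides with $\bar\chi$ times the reduction's particular opening with probability at most about $1/2$. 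A cleaner fallback is to argue that the chosen-coordinate entropy of $s$ given $C$ makes the extracted value differ. If $\sn^\ast$ must actually be used, I would case-split on whether $\bar\chi$ is invertible modulo each CRT factor.
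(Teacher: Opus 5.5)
Your skeleton matches the paper's: embed $(a_1,a_2,a_4)$ via $H_{\mathrm{ring}}$, fork at the closing query, subtract the two transcripts, then subtract the known honest opening. But the proof hinges on the step you leave open, namely that $\delta=(\bar z_k-\bar\chi k,\ \bar z_s-\bar\chi s,\ \bar z_e-\bar\chi e)$ is nonzero. None of the routes you sketch for it works here.

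The witness-indistinguishability argument needs several short openings per coin. However, $C=a_1k+a_2s+a_4e$ maps at most $5^{1536}\approx 2^{3566}$ candidate openings into $q^{512}\approx 2^{6956}$ values. So an honest coin has, with overwhelming probability, a \emph{unique} short opening; this is the same uniqueness the paper's linkability theorem uses. The view therefore determines the opening information-theoretically, and your simulated signing oracle even publishes $\sn=a_3k+e$. The ``entropy of $s$ given $C$'' fallback fails for the same reason.

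Relatedly, your remark that Assumption~\ref{ass:lwe} may be dispensable is wrong. A forger that inverts $C_\pi$ and then signs honestly for a never-signed coin wins $\mathsf{Exp}^{\mathrm{uf}}$. Yet its forked transcripts give $\bar z=\bar\chi\cdot(k,s,e)$ exactly, i.e.\ $\delta=0$, and no Ring-SIS solution results. Some one-wayness of $C$ beyond Ring-SIS must therefore enter, and it enters precisely at this step.

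The paper closes the gap with a case split on $\delta$. If $\delta\neq 0$, it is a Ring-SIS solution. If $\delta=0$ and $\sn^\ast\neq\sn_\pi$, the second relation gives $\bar\chi(\sn^\ast-\sn_\pi)=0$. Since $\sn^\ast$ is fixed before the forked challenge and the nonzero difference survives modulo some prime ideal, the two challenges agree there only with probability bounded away from $1$; this is your CRT-factor idea. If $\delta=0$ and $\sn^\ast=\sn_\pi$, the adversary has output $\bar\chi$ times the opening of an uncorrupted coin, which the paper excludes by the hiding of $C_\pi$ under Assumption~\ref{ass:lwe}. Note also that the paper's reduction needs no simulation at all: it knows every opening and signs honestly.

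If you want to keep your Step~1, there is a cleaner way to use it. Guess the target coin, simulate its signatures, and switch that coin and its serial number to uniform elements via Assumption~\ref{ass:lwe}. The relaxed opening then reads $a_1\bar z_k+a_2\bar z_s+a_4\bar z_e-\bar\chi u=0$. This is a Ring-SIS solution for the length-four instance $(a_1,a_2,a_4,u)$, and it is automatically nonzero because $\bar\chi\neq 0$. Freshness of $\sn^\ast$ is then needed only to ensure the forked query is not one of your programmed points.
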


\begin{proof}
We construct a reduction $\mathcal{B}$ that uses a successful adversary $\mathcal{A}$ to solve the Ring-SIS problem (Assumption~\ref{ass:sis}). Let $Q_H$ be the maximum number of random oracle queries made by $\mathcal{A}$. The reduction $\mathcal{B}$ is given a Ring-SIS challenge $\mathbf{a} = (a_1, a_2, a_4)$ and programs $H_{\mathrm{ring}}$ such that the corresponding public parameters match $\mathbf{a}$, while generating $a_3$ honestly. $\mathcal{B}$ handles $\mathcal{O}^{\mathrm{coin}}$ by generating honest coin openings $(k_i, s_i, e_i)$ and storing them, answering $\mathcal{O}^{\mathrm{corrupt}}$ and $\mathcal{O}^{\mathrm{sign}}$ using these known openings.

Suppose $\mathcal{A}$ outputs a successful forgery $(\mathbf{C}^\ast, \mu^\ast, \sn^\ast, \sigma^\ast)$ with probability $\varepsilon$. The signature $\sigma^\ast$ contains responses and a challenge chain that closes. By Lemma~\ref{lem:challenge}, except with probability at most $Q_H 2^{-256}$, the chain values were obtained via queries to $\mathcal{H}$. By the General Forking Lemma \cite{pointcheval2000security,bellare2006multi}, $\mathcal{B}$ can rewind $\mathcal{A}$ to the oracle query corresponding to the target index $\pi$ (the index where the response relations were first queried), feed it a different random oracle response, and obtain a second accepting transcript with probability $\varepsilon' \ge (\varepsilon - Q_H 2^{-256})^2 / Q_H$.

The two accepting transcripts share the relation images $(t^{\mathrm{pk}}_\pi, t^{\mathrm{sn}}_\pi)$ but have distinct challenges $\chi_\pi \neq \chi'_\pi$. Subtracting the two instances of~\eqref{eq:relations} at $\pi$ yields:
\[
a_1 \bar{z}_k + a_2 \bar{z}_s + a_4 \bar{z}_e = \bar{\chi}\, C_\pi,
\qquad
a_3 \bar{z}_k + \bar{z}_e = \bar{\chi}\, \sn^\ast ,
\]
where $\bar{z}_k = z_{k,\pi} - z'_{k,\pi}$, $\bar{z}_s = z_{s,\pi} - z'_{s,\pi}$, $\bar{z}_e = z_{e,\pi} - z'_{e,\pi}$, and $\bar{\chi} = \chi_\pi - \chi'_\pi$. Since $\chi_\pi, \chi'_\pi \in \mathcal{C}$ are distinct, $\bar{\chi} \neq 0$. Because both transcripts satisfy the norm bounds, $\lVert \bar{z}_k \rVert, \lVert \bar{z}_s \rVert, \lVert \bar{z}_e \rVert \le 2\beta$. This gives a \emph{relaxed} opening of $C_\pi$, which is the standard extracted witness format for lattice-based sigma protocols over rings where short challenge differences may lack inverses \cite{lyubashevsky2012lattice,esgin2019lattice,yuen2021dualring}.

By the rules of $\mathsf{Exp}^{\mathrm{uf}}$, $C_\pi$ is an uncorrupted honest coin generated by $\mathcal{B}$, so $\mathcal{B}$ knows its exact opening $(k_\pi, s_\pi, e_\pi)$ where $C_\pi = a_1 k_\pi + a_2 s_\pi + a_4 e_\pi$ and its true serial number $\sn_\pi = a_3 k_\pi + e_\pi$. Subtracting the known honest relation from the extracted one gives, writing $\delta_k = \bar{z}_k - \bar{\chi} k_\pi$, $\delta_s = \bar{z}_s - \bar{\chi} s_\pi$, $\delta_e = \bar{z}_e - \bar{\chi} e_\pi$:
\[
a_1\delta_k + a_2\delta_s + a_4\delta_e = 0,
\qquad
a_3\delta_k + \delta_e = \bar{\chi}(\sn^\ast - \sn_\pi) .
\]
We now consider three cases:
\begin{enumerate}
    \item $(\delta_k, \delta_s, \delta_e) \neq (0,0,0)$. The first equation directly yields a short, nonzero solution to the Ring-SIS instance for $(a_1, a_2, a_4)$, breaking Assumption~\ref{ass:sis} \cite{ajtai1996generating,lyubashevsky2006generalized}.
    \item $(\delta_k, \delta_s, \delta_e) = (0,0,0)$ and $\sn^\ast \neq \sn_\pi$. Then the second equation implies $\bar{\chi}(\sn^\ast - \sn_\pi) = 0$. In the Fiat--Shamir transform, the serial number $\sn^\ast$ is committed to before the challenge $\chi_\pi$ is generated. For any fixed $\Delta\sn = \sn^\ast - \sn_\pi \neq 0$, $\Delta\sn$ is non-zero modulo at least one prime ideal $\mathfrak{p}$ of $\Rq$. Because the challenge space $\mathcal{C}$ is sufficiently large and well-distributed, the probability over the independent choices of $\chi_\pi, \chi'_\pi \in \mathcal{C}$ that $\chi_\pi \equiv \chi'_\pi \pmod{\mathfrak{p}}$ is strictly bounded away from 1 (and is small for our parameters). Thus, conditioned on a successful fork, $\bar{\chi} \Delta\sn \neq 0$ with non-negligible probability, which yields a contradiction. Hence, this case does not prevent the reduction from succeeding.
    \item $(\delta_k, \delta_s, \delta_e) = (0,0,0)$ and $\sn^\ast = \sn_\pi$. This means the adversary successfully computed the exact scaled responses $\bar{z}_k = \bar{\chi} k_\pi$, $\bar{z}_s = \bar{\chi} s_\pi$, $\bar{z}_e = \bar{\chi} e_\pi$ for a coin it never corrupted. Under Assumption~\ref{ass:lwe}, the commitment $C_\pi$ computationally hides $(k_\pi, s_\pi, e_\pi)$ \cite{lyubashevsky2010ideal,baum2018more}, so the adversary's probability of correctly guessing this exact relation is negligible.
\end{enumerate}
Therefore, if $\varepsilon$ is non-negligible, $\mathcal{B}$ breaks Assumption~\ref{ass:sis} with non-negligible probability $\varepsilon'$, contradicting the hardness of Ring-SIS \cite{ajtai1996generating}. The on-chain ring-existence check with digest matching (Section~\ref{sec:onchain}) enforces condition (i) operationally, confining $\mathbf{C}^\ast$ to genuine deposits and excluding adversarially fabricated commitments with adversary-known openings. \qed
\end{proof}

\subsection{Signer Ambiguity and Unlinkability}

\begin{theorem}[Signer Ambiguity]
\label{thm:anonymity}
In the ROM, under Assumption~\ref{ass:lwe} (commitment hiding), Assumption~\ref{ass:link} (serial-number linking), and the zero-knowledge property of the rejection-sampled sigma protocol, every PPT adversary has negligible advantage in $\mathsf{Exp}^{\mathrm{anon}}$. Consequently no PPT adversary identifies the signer within a ring of size $r$ with $t$ corrupted members with probability exceeding $1/(r-t)$ by more than a negligible amount.
\end{theorem}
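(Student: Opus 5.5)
The plan is a hybrid argument that starts at $\mathsf{Exp}^{\mathrm{anon}}_0$ and ends at $\mathsf{Exp}^{\mathrm{anon}}_1$. The middle hybrid is a signer-independent game, in which the challenge signature is produced by a simulator that never touches either opening. By symmetry it suffices to bound the distance from $\mathsf{Exp}^{\mathrm{anon}}_b$ to this middle game for each $b$.

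\emph{Hybrid 1: simulate the transcript.} We replace the honest $\Sig$ by a ring simulator. It samples $c_0$ uniformly, samples every response triple $(z_{k,i},z_{s,i},z_{e,i})$ from the accepted-response distribution of the rejection sampler, and walks the whole chain forward. At the last index it programs the random oracle so that the chain closes back to $c_0$. The statistical distance from the real game has two sources. The first is the rejection-sampling lemma of \cite{lyubashevsky2012lattice}: accepted true-index responses are within distance $2^{-\omega(\log\lambda)}$ of the secret-independent target distribution. This uses $\lVert\chi_\pi x\rVert_\infty\le w\eta\ll\varsigma$, as in the proof of Theorem~\ref{thm:correctness}. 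The second is a programming failure, which occurs only if the adversary already queried the programmed point. The programmed input contains the freshly sampled packed relation images, which have high min-entropy, so by Lemma~\ref{lem:challenge} this probability is at most $Q_H2^{-256}$.

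We work with the idealized sampler, under which decoys and the true index are identically distributed, as flagged in Section~\ref{sec:limitations}. In Hybrid 1 the only remaining dependence on $b$ is the pair $(C_{i_b},\sn)$ together with the value $\sn=a_3k_{i_b}+e_{i_b}$.

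\emph{Hybrid 2: replace the serial number.} We replace $\sn$ by $a_3k'+e'$ for a fresh independent short pair, or equivalently by a uniform element. A distinguisher here breaks Assumption~\ref{ass:link}. The reduction embeds the challenge pair $(C^\ast,\sn^\ast)$ as the $i_b$-th honest coin and as the challenge serial number. It creates all other honest coins itself, so it can answer corruption and signing queries for them. It never needs the opening of $C_{i_b}$, because the game forbids corrupting or signing with that coin and the signature is simulated. The ROM subtlety $e=H_{\mathrm{short}}(k)$ is handled exactly as in Assumption~\ref{ass:lwe}(ii): the view is unchanged unless the adversary queries $H_{\mathrm{short}}$ at $k_{i_b}$ or $k_{i_{1-b}}$. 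That event yields a search Ring-LWE solver, which is obtained by guessing the query among $Q_H$.

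\emph{Hybrid 3: the two games coincide.} In Hybrid 2 the challenge output is independent of $b$. The commitments $C_{i_0},C_{i_1}$ were already part of the adversary's input before the challenge, so both games now produce the same view. Assumption~\ref{ass:lwe}(i) is needed only if one wants the ring commitments themselves to be simulatable. Summing the hybrid distances gives negligible advantage.

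For the corollary, we fix the $r-t$ uncorrupted members. Any strategy that guesses the signer with probability $1/(r-t)+\epsilon$ yields some pair $(i_0,i_1)$ of uncorrupted members on which it distinguishes with advantage at least $\epsilon/(r-t)^2$. This follows from a standard averaging over pairs, and the resulting advantage is negligible.

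The main obstacle is Hybrid 2 in the presence of the deterministic error. I would first bound the bad event that the adversary queries $H_{\mathrm{short}}$ at a secret. Conditioned on that event not occurring, $e$ is an independent short element, and the reduction to Assumption~\ref{ass:link} is clean.
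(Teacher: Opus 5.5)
Your proposal is correct and follows the paper's proof essentially step for step. You simulate the whole challenge chain (rejection-sampling HVZK plus one programmed oracle point), then swap $\sn$ for an independent $a_3k'+e'$ via the linking assumption in the ROM, bounding the event that $H_{\mathrm{short}}$ is queried at the secret by search Ring-LWE, after which the view is independent of $b$. Your pair-averaging argument for the $1/(r-t)$ corollary and your remark that commitment hiding is not separately needed only make explicit what the paper leaves implicit; the one detail to add is that the reduction must guess in advance which $\mathcal{O}^{\mathrm{coin}}$ query will become $C_{i_b}$, which costs only a polynomial factor.
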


\begin{proof}
We prove this via a sequence of indistinguishable games, starting with the real anonymity experiment for a fixed bit $b \in \{0, 1\}$. Let $\mathsf{Adv}_i$ denote the adversary's advantage in $\mathsf{Game}_i$.

\textbf{$\mathsf{Game}_0$:} This is the real $\mathsf{Exp}^{\mathrm{anon}}_b$ experiment. The challenger generates the signature using the exact opening $(k_{i_b}, s_{i_b}, e_{i_b})$ of $C_{i_b}$.

\textbf{$\mathsf{Game}_1$:} We modify the challenger to construct the signature $\sigma$ using the zero-knowledge simulator. Instead of executing the real $\Sig$ algorithm, the challenger randomly samples all responses $z_{k,j}, z_{s,j}, z_{e,j} \leftarrow D_\varsigma$ for all $j \in \{0, \dots, r-1\}$, computes the relation images $t^{\mathrm{pk}}_j$ and $t^{\mathrm{sn}}_j$ directly, and programs the random oracle $\mathcal{H}$ to enforce chain closure. By the honest-verifier zero-knowledge (HVZK) property of the rejection-sampled sigma protocol \cite{lyubashevsky2009fiat,lyubashevsky2012lattice}, the accepted responses in $\mathsf{Game}_0$ are within negligible statistical distance $\Delta$ from the unconditioned mask distribution used in $\mathsf{Game}_1$. Thus, $\lvert \mathsf{Adv}_0 - \mathsf{Adv}_1 \rvert \le \Delta$, which is negligible.
At this point, the signature $\sigma$ is generated without knowledge of the secret index $i_b$ or its witness. The adversary's view depends on $b$ \emph{only} through the published serial number $\sn = a_3 k_{i_b} + e_{i_b}$.

\textbf{$\mathsf{Game}_2$:} We replace the serial number $\sn = a_3 k_{i_b} + e_{i_b}$ with $\sn' = a_3 k' + e'$, where $k', e'$ are freshly sampled independent short elements. Distinguishing $\mathsf{Game}_1$ from $\mathsf{Game}_2$ is exactly the task of deciding whether a serial number shares a secret with a specific commitment. We argue in the random-oracle model, modelling $H_{\mathrm{short}}$ as a random oracle with short outputs. Let $Q_k$ be the event that $\mathcal{A}$ queries $H_{\mathrm{short}}$ at the challenge secret $k_{i_b}$. Conditioned on $\neg Q_k$, the deterministic error $e_{i_b} = H_{\mathrm{short}}(k_{i_b})$ is, in $\mathcal{A}$'s view, a uniform short element independent of $k_{i_b}$ and shared between $C_{i_b}$ and $\sn$; the view is then exactly that of the independent-error variant, in which (by Assumption~\ref{ass:lwe} \cite{lyubashevsky2010ideal,baum2018more}) $C_{i_b}$ hides its opening and $\sn = a_3 k_{i_b} + e_{i_b}$ is a genuine Ring-LWE sample, so $(C_{i_b}, a_3 k_{i_b} + e_{i_b})$ is indistinguishable from $(C_{i_b}, a_3 k' + e')$ by two decisional Ring-LWE hops (the linking statement~\ref{ass:link}). Hence $\lvert \mathsf{Adv}_1 - \mathsf{Adv}_2 \rvert \le 2\varepsilon_{\mathrm{lwe}} + \Pr[Q_k]$, where $\Pr[Q_k] \le Q_H\,\delta$ and $\delta$ bounds the per-query probability of recovering the committed $k_{i_b}$ (search Ring-LWE, equivalently Assumption~\ref{ass:sis} on the coin); all terms are negligible.

In $\mathsf{Game}_2$, the entire view provided to the adversary, including the signature $\sigma$ and the serial number $\sn'$, is completely independent of the bit $b$. Therefore, the adversary's advantage in $\mathsf{Game}_2$ is exactly $0$. Summing the bounds, the adversary's total advantage in $\mathsf{Exp}^{\mathrm{anon}}$ is bounded by $\Delta + \varepsilon_{\mathrm{link}}$, which is negligible.

Consequently, no PPT adversary can identify the signer among the $r-t$ honest, uncorrupted members with probability exceeding $1/(r-t)$ by more than a negligible amount, satisfying the standard anonymity game for linkable ring signatures \cite{liu2004linkable,alberto2018post}. \qed
\end{proof}

\begin{corollary}[Deposit--Withdrawal Unlinkability]
\label{cor:unlink}
Under the same assumptions, no PPT observer links a specific deposit $C$ to a withdrawal beyond the $1/(r-t)$ ring ambiguity: the withdrawal publishes only $(\mathbf{C}, \mu, \sn, \sigma)$, the transcript is simulatable from these, and deciding whether $C$ and $\sn = a_3 k + e$ share the committed $k$ reduces to decisional Ring-LWE in the random-oracle model (Assumption~\ref{ass:lwe}; see Remark~\ref{rem:gaps} for the self-correlated-noise caveat of the deterministic $e = H_{\mathrm{short}}(k)$); recovery of $k$ from $\sn$ is the search Ring-LWE problem.
\end{corollary}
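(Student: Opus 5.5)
The plan is to derive the corollary directly from Theorem~\ref{thm:anonymity} by recasting the observer's linking task as a distinguishing game. I would formalize ``linking a deposit $C$ to a withdrawal'' as follows: an adversary $\mathcal{D}$ receives a published withdrawal $(\mathbf{C},\mu,\sn,\sigma)$ and a target honest, uncorrupted, never-signed commitment $C\in\mathbf{C}$. It then guesses whether $C$ is the signer. A success probability exceeding $1/(r-t)$ by a non-negligible amount immediately gives a distinguisher between two honest indices $i_0,i_1$. Concretely, $\mathcal{D}$ is run in $\mathsf{Exp}^{\mathrm{anon}}_b$ with $C=C_{i_0}$, and the guess $b'=0$ is output iff $\mathcal{D}$ says ``linked''. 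By a standard averaging (hybrid) argument over the $r-t$ honest positions, the advantage against $\mathsf{Exp}^{\mathrm{anon}}$ is at least $1/(r-t)^{2}$ times the excess success probability. Theorem~\ref{thm:anonymity} then bounds this advantage by a negligible quantity.

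Second, I would make explicit why the published data add nothing beyond the anonymity game. The withdrawal reveals only $(\mathbf{C},\mu,\sn,\sigma)$ together with the digest $\snstar$, and $\snstar$ is a deterministic function of $\NTT(\sn)$. On-chain phase state likewise consists of deterministic functions of $\sigma$ and the context. The observer's view is therefore a post-processing of the challenger's output in $\mathsf{Exp}^{\mathrm{anon}}$, and the reduction simulates it perfectly. I would reuse the game sequence from the proof of Theorem~\ref{thm:anonymity}. $\mathsf{Game}_1$ shows that $\sigma$ is simulatable from $(\mathbf{C},\mu,\sn)$ by programming $\mathcal{H}$, up to statistical distance $\Delta$. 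After this step, the only signer-dependent quantity is $\sn$.

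Third, I would isolate the pairwise statement. Deciding whether a given $C$ and $\sn$ share $k$ is exactly distinguishing $(C,a_3k+e)$ from $(C,a_3k'+e')$, that is, Assumption~\ref{ass:link}. In the ROM, conditioned on the adversary not querying $H_{\mathrm{short}}(k)$, the error $e$ is independent of $k$. Two decisional Ring-LWE hops via Assumption~\ref{ass:lwe} then close the argument. The bad event $Q_k$ is charged to key recovery, namely search Ring-LWE on $\sn$. The claim about recovering $k$ from $\sn$ is the search Ring-LWE statement of Assumption~\ref{ass:lwe}(ii) read off directly.

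The main obstacle I anticipate is handling the self-correlated noise rigorously. The commitment $C$ and $\sn$ share the same $e$, so the independent-error reduction is valid only off the event $Q_k$. Bounding $\Pr[Q_k]$ requires a key-recovery reduction in which the simulator must still answer $H_{\mathrm{short}}$ queries consistently without knowing which query is $k$. My first attempt would be a guess-the-query reduction: pick one of the $Q_H$ queries uniformly and test it as a candidate $k$ by checking shortness and $a_3k+H_{\mathrm{short}}(k)=\sn$. This yields a $Q_H$ loss against search Ring-LWE. If that fails, I would fall back to stating the result relative to Assumption~\ref{ass:link} as a standalone assumption, as Remark~\ref{rem:gaps} allows.
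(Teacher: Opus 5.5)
Your proposal follows the paper's own justification: the corollary is read off from the game sequence of Theorem~\ref{thm:anonymity} (simulate $\sigma$ from $(\mathbf{C},\mu,\sn)$ via the rejection-sampling HVZK property, then swap $\sn$ for an independent $a_3k'+e'$ in the ROM off the event $Q_k$, which is charged to key recovery), and your averaging step and guess-the-query bound on $\Pr[Q_k]$ only make explicit what the paper leaves implicit. Two small repairs are needed: for a binary ``is $C$ the signer?'' guess the benchmark $1/(r-t)$ is wrong (always answering ``no'' already succeeds with probability $1-1/(r-t)$ under a uniform honest signer), so you should measure a distinguishing advantage or ask the observer to name the signer's index; and in the $Q_k$ reduction you should detect $k$ by checking that $\sn-a_3x$ is short (your lazily simulated $H_{\mathrm{short}}(x)$ is not the unknown $e$), while simulating $C_{i_b}=a_1k+a_2s+a_4e$ without $(k,e)$ needs a second Ring-LWE sample in the same secret $k$ (e.g.\ by programming $a_1$ through $H_{\mathrm{ring}}$), so the key-recovery problem is two-sample search Ring-LWE on $(C_{i_b},\sn)$ rather than search Ring-LWE on $\sn$ alone.
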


\subsection{Linkability, Non-Frameability, and Protocol-Level Security}

\begin{theorem}[Linkability]
\label{thm:linkability}
Except with negligible probability, every verifying tuple carries $\sn = a_3 k + e$ for the unique short $(k,e)$ opening some ring member, and any two verifying spends of the same coin produce identical nullifiers $\snstar$. In particular, an adversary holding one coin cannot output two verifying withdrawals with distinct nullifiers.
\end{theorem}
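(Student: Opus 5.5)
The proof runs the forking-based extractor of Theorem~\ref{thm:unforgeability}, but now against an adversary that may hold openings of ring members, so no known honest opening can be subtracted. Suppose $\mathcal{A}$ outputs a verifying tuple $(\mathbf{C}, \mu, \sn, \sigma)$. By Lemma~\ref{lem:challenge}, except with probability $Q_H 2^{-256}$ every chain value was obtained by an explicit oracle query. Hence there is an index $\pi$ whose chain query (containing $\pack(\hat{\sn})$ and the relation images at $\pi$) was made last around the ring. We fork at that query with the General Forking Lemma, exactly as before. This yields two accepting transcripts with common $(t^{\mathrm{pk}}_\pi, t^{\mathrm{sn}}_\pi)$, common $\sn$ (it is inside the hashed input), and $\chi_\pi \neq \chi'_\pi$. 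Subtracting gives a relaxed opening $(\bar z_k, \bar z_s, \bar z_e)$ of norm at most $2\beta$ with
\[
a_1 \bar z_k + a_2 \bar z_s + a_4 \bar z_e = \bar\chi\, C_\pi, \qquad a_3 \bar z_k + \bar z_e = \bar\chi\, \sn .
\]

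We then convert this relaxed statement into the claim about the \emph{unique} opening $(k,e)$ of $C_\pi$. Two situations arise.
\begin{itemize}
\item If $C_\pi$ has a short opening $(k,s,e)$ at all (which the on-chain ring-existence check plus the deposit model is supposed to guarantee for genuine coins; we would add this as a hypothesis), set $\delta_x = \bar z_x - \bar\chi x$ for $x \in \{k,s,e\}$. Then $a_1\delta_k + a_2\delta_s + a_4\delta_e = 0$ with $\lVert\delta\rVert \le 2\beta + w\eta\sqrt n$. If $\delta \neq 0$, this is a Ring-SIS solution for $(a_1,a_2,a_4)$ (Assumption~\ref{ass:sis}), embedded by programming $H_{\mathrm{ring}}$. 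If $\delta = 0$, the second relation gives $\bar\chi(\sn - (a_3 k + e)) = 0$.
\item To conclude $\sn = a_3 k + e$ in the second case, we reuse the prime-ideal argument of case~2 of Theorem~\ref{thm:unforgeability}. Since $\Delta\sn$ is fixed before $\chi_\pi$ is sampled, it is nonzero modulo some prime $\mathfrak p$. The fork produces $\chi_\pi \equiv \chi'_\pi \pmod{\mathfrak p}$ only with bounded probability. Repeating the fork a polynomial number of times, or forking with fresh answers until some $\bar\chi \not\equiv 0 \bmod \mathfrak p$, then forces $\Delta\sn = 0$.
\end{itemize}
Uniqueness of $(k,e)$ itself is Assumption~\ref{ass:sis}: two short openings give a collision.

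For the second sentence of the statement, consider two verifying spends of the same coin $C$. By the argument above, each carries $a_3 k + e$ for the unique opening of its extracted member. The difficulty is that the extracted member index might differ between the two spends. The ``one coin'' hypothesis is handled by having the reduction know the adversary's openings only for the coins it deposited, and all other ring members be honest. If the extracted index is an honest coin, we are in the theft setting of Theorem~\ref{thm:unforgeability} and reach a contradiction. So both spends extract the adversary's single coin, and hence carry the same $\sn$. Since $\NTT$ and $\pack$ are deterministic bijections, $\snstar$ coincides. Two distinct nullifiers would therefore require either a Ring-SIS collision or an honest-coin forgery. We would combine both forks by running the extractor on each output separately, losing a factor $Q_H$ per fork.

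The main obstacle is the $\delta = 0$, $\bar\chi\Delta\sn = 0$ case. In the fully splitting ring, $\bar\chi$ can be a zero divisor, so the step from $\bar\chi\Delta\sn = 0$ to $\Delta\sn = 0$ is not algebraic. It needs a probabilistic argument that a random challenge difference is nonzero in every CRT slot where $\Delta\sn$ is nonzero, which likely requires many forks, or a multi-fork extractor with a slot-wise union bound over the $n$ prime ideals. A second difficulty is adversarially fabricated commitments that may have no short opening at all. For these we would either restrict the statement to rings of honestly deposited coins plus the adversary's own coins with known openings, or extract the opening from the deposit via a proof of knowledge. The protocol has no such proof, so we would have to state this restriction explicitly, in the spirit of Remark~\ref{rem:gaps}.
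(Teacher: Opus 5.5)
Your outline follows the paper's own argument. You fork to a relaxed opening and subtract the short opening of $C_\pi$, giving Ring-SIS when $\delta\neq 0$. Otherwise you pass from $\bar\chi\,(\sn-(a_3k+e))=0$ to $\sn=a_3k+e$, and finish by determinism of $\NTT$, $\pack$ and SHA-256. Your bookkeeping about which member is extracted, and about fabricated commitments with no short opening, is more careful than the paper's, which silently assumes the reduction knows a short opening of $C_\pi$.

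The gap is the step from $\bar\chi\,\Delta\sn=0$ to $\Delta\sn=0$, and forking more often cannot close it. The forking lemma hands you two \emph{accepting} transcripts, so $\chi_\pi,\chi'_\pi$ are distributed conditioned on acceptance, not as independent challenges. An adversary can make acceptance depend on the residue $\chi_\pi \bmod \mathfrak{p}$. Then every accepting pair satisfies $\bar\chi\equiv 0\pmod{\mathfrak{p}}$, and no re-fork ever produces the slot you need. That two independent challenges collide modulo $\mathfrak{p}$ with probability bounded away from $1$ only bounds the per-slot soundness error by roughly $1/q$. For $q=12289$ this is not negligible, and it is exactly what Fiat--Shamir grinding exploits.

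Here is a concrete attack. Write $\hat{\chi}_\pi[j]$ for the $j$-th NTT coordinate of the challenge. It takes at most $q$ values, so some $g$ has $\Pr[\hat{\chi}_\pi[j]=g]\ge 1/q$. A holder of one coin $(k,s,e)$ proceeds as follows:
\begin{enumerate}
\item Fix $d\neq 0$ and publish $\hat{\sn}'=\NTT(a_3k+e)+d\,\mathbf{u}_j$, where $\mathbf{u}_j$ is the $j$-th unit vector of $\mathbb{Z}_q^{n}$.
\item Commit $\widehat{t^{\mathrm{pk}}_\pi}$ honestly and set $\widehat{t^{\mathrm{sn}}_\pi}=\hat{a}_3\odot\hat{y}_k+\hat{y}_e-gd\,\mathbf{u}_j$.
\item Run the decoy chain with $\hat{\sn}'$, re-randomizing until $\hat{\chi}_\pi[j]=g$; this takes an expected at most $q\approx 2^{13.6}$ attempts.
\end{enumerate}
Then $z_\pi=y+\chi_\pi(k,s,e)$ passes all norm checks. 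The verifier recomputes $\hat{a}_3\odot\hat{z}_{k,\pi}+\hat{z}_{e,\pi}-\hat{\chi}_\pi\odot\hat{\sn}'=\hat{a}_3\odot\hat{y}_k+\hat{y}_e-\hat{\chi}_\pi[j]\,d\,\mathbf{u}_j$, which matches the commitment, so the chain closes with $\sn'\neq a_3k+e$. Varying $(j,d)$ yields as many distinct nullifiers for one coin as desired. Forking this adversary always gives $\delta=0$ and $\bar\chi$ vanishing at slot $j$, so extraction learns nothing.

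The paper's proof makes the same inference (``strictly bounded away from 1'', hence $\sn=a_3k+e$). So the statement itself does not hold for these parameters in the fully splitting ring. It can only be rescued by changing the scheme so that $\max_g\Pr[\chi\equiv g \bmod \mathfrak{p}]$ is negligible for every prime ideal $\mathfrak{p}$, for example $\ell$-fold parallel challenges with $q^{-\ell}$ negligible, or a partially splitting ring. A multi-fork extractor will not do it.
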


\begin{proof}
By the extraction established in Theorem~\ref{thm:unforgeability} (following the linkability framework of \cite{liu2004linkable,alberto2018post}), any verifying tuple implies the existence of an extractor that recovers a short (relaxed) witness $(\bar{z}_k, \bar{z}_s, \bar{z}_e)$ satisfying $a_1 \bar{z}_k + a_2 \bar{z}_s + a_4 \bar{z}_e = \bar{\chi} C_\pi$ alongside the serial-number relation $a_3 \bar{z}_k + \bar{z}_e = \bar{\chi}\,\sn$ for the published $\sn$.

By the binding property of the commitment (Assumption~\ref{ass:sis}), the short opening $(k, s, e)$ of $C_\pi$ is unique. As shown in the proof of Theorem~\ref{thm:unforgeability}, this implies $\bar{z}_k = \bar{\chi} k$ and $\bar{z}_e = \bar{\chi} e$ except with negligible probability. Substituting these into the serial-number relation yields $\bar{\chi}(a_3 k + e - \sn) = 0$. Because $\sn$ is committed to before the random oracle challenge is generated, the probability that the challenge difference $\bar{\chi}$ annihilates a non-zero $\Delta\sn = a_3 k + e - \sn$ is strictly bounded away from 1. Thus, conditioned on a successful extraction, any verifying transcript must have $\sn = a_3 k + e$.

Because $\sn = a_3 k + e$ is a deterministic function of that unique opening, the serial number attached to a given coin is permanently fixed. Since the nullifier $\snstar$ is a deterministic collision-resistant digest of $\NTT(\sn)$, it is also fixed. Therefore, any two verifying spends of the same coin must produce the identical nullifier $\snstar$, making double-spends trivially detectable. \qed
\end{proof}

\begin{lemma}[Non-Frameability]
\label{lem:nonframe}
No PPT adversary can produce a verifying withdrawal carrying the serial number of an honest user's unspent coin, except with negligible probability.
\end{lemma}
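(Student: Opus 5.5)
The proof is a reduction to Ring-SIS that mirrors the extraction in the proof of Theorem~\ref{thm:unforgeability}, with the target coin's serial number playing the role that condition (ii) of $\mathsf{Exp}^{\mathrm{uf}}$ played there. We formalize the framing experiment as follows. The adversary has access to $\mathcal{O}^{\mathrm{coin}}$, $\mathcal{O}^{\mathrm{corrupt}}$ and $\mathcal{O}^{\mathrm{sign}}$. It wins if it outputs a verifying tuple $(\mathbf{C}^\ast,\mu^\ast,\sn^\ast,\sigma^\ast)$ satisfying three conditions:
\begin{itemize}
  \item $\sn^\ast = a_3 k_j + e_j$ for some honest coin $j$;
  \item coin $j$ was never corrupted;
  \item $(\mathbf{C}^\ast,\mu^\ast)$ was never answered by $\mathcal{O}^{\mathrm{sign}}$ on coin $j$, since otherwise the adversary merely replays the honest spend.
\end{itemize}
The ring $\mathbf{C}^\ast$ may contain adversarial coins.

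The reduction $\mathcal{B}$ receives a Ring-SIS instance $(a_1,a_2,a_4)$. It embeds the instance by programming $H_{\mathrm{ring}}$ and samples $a_3$ itself. It then generates every honest opening itself, so it can answer all three oracles perfectly. When $\mathcal{A}$ outputs a forgery, $\mathcal{B}$ forks it with the General Forking Lemma at the chain query that determines $\chi_\pi$ for some index $\pi$, exactly as in Theorem~\ref{thm:unforgeability}. Because $\sn^\ast$ is hashed into every chain input, both transcripts share $\sn^\ast$. This yields a relaxed extracted witness $(\bar z_k,\bar z_s,\bar z_e,\bar\chi)$ satisfying
\[
a_1\bar z_k+a_2\bar z_s+a_4\bar z_e=\bar\chi C_\pi,\qquad a_3\bar z_k+\bar z_e=\bar\chi\,\sn^\ast=\bar\chi(a_3k_j+e_j).
\]

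We then split into two cases according to which coin $C_\pi$ is. If $C_\pi = C_j$, we argue as in the three-case analysis of Theorem~\ref{thm:unforgeability}. A nonzero $\delta=(\bar z_k-\bar\chi k_j,\bar z_s-\bar\chi s_j,\bar z_e-\bar\chi e_j)$ is a short Ring-SIS solution, with norm bounded by $2\beta+w\eta\sqrt n$ or similar. If instead $\delta=0$, the adversary has produced $\bar\chi$-multiples of the secret opening of an uncorrupted coin, which we bound using hiding under Assumption~\ref{ass:lwe}.

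If $C_\pi\neq C_j$, for instance an adversarial coin, we subtract $\bar\chi$ times the honest serial-number relation from the second extracted equation. This gives
\[
a_3(\bar z_k-\bar\chi k_j)+(\bar z_e-\bar\chi e_j)=0,
\]
which is a short solution to the rank-two SIS instance $(a_3,1)$. It is nonzero unless $\bar z_k=\bar\chi k_j$ and $\bar z_e=\bar\chi e_j$, and in that case $k_j$ is essentially recovered, contradicting search Ring-LWE. To use this case, $\mathcal{B}$ must embed a second SIS instance in $a_3$, so it will guess at the start which case occurs and program $H_{\mathrm{ring}}$ accordingly, losing a factor of $2$.

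The main obstacle is the $\delta=0$ subcase, where the adversary effectively knows $\bar\chi k_j$. Because $\bar\chi$ may be a non-invertible zero divisor, this does not immediately give $k_j$. I would first try to argue coefficient-wise on the CRT components where $\bar\chi$ is invertible: there $k_j$ is revealed modulo those prime ideals. Even this partial information should contradict the pseudorandomness of $(C_j,\sn_j)$ under Assumptions~\ref{ass:lwe} and~\ref{ass:link}, because the rejection-sampled signing oracle leaks nothing about $k_j$. Failing that, I would weaken the statement to rings of honest deposits and note the gap in Remark~\ref{rem:gaps}.
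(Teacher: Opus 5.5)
\textbf{Comparison with the paper's route.} Your route differs from the paper's. The paper does not fork afresh. It observes that a framing tuple is already a forgery in $\mathsf{Exp}^{\mathrm{uf}}$: the coin is uncorrupted and, by condition (ii), its serial number was never output by $\mathcal{O}^{\mathrm{sign}}$. It then invokes Theorems~\ref{thm:unforgeability} and~\ref{thm:linkability}, which tacitly confines the framing ring to honest coins. Your $(a_3,1)$ branch is an attempt to drop that restriction.

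\textbf{The gap at $\delta=0$.} The proposal has a genuine gap exactly at the $\delta=0$ subcases, and the obstruction is not the one you name. In both branches you invoke a hardness assumption (hiding of $C_j$, search Ring-LWE for $k_j$) against a secret that your own $\mathcal{B}$ sampled and holds. Extracting $\bar\chi(k_j,s_j,e_j)$, or even $k_j$ itself when $\bar\chi$ happens to be invertible, contradicts nothing for a $\mathcal{B}$ that already knows it. So the zero-divisor/CRT discussion is beside the point. Using hiding needs a different reduction in which coin $j$ is an external challenge, and there you can no longer test whether $\delta=0$. Lyubashevsky's many-openings trick is also unavailable, since the commitment is far from compressing ($5^{3n}\ll q^{n}$). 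As your last sentence concedes, the lemma is not established.

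\textbf{The formalization also needs repair.} Your win condition allows $\mathcal{O}^{\mathrm{sign}}$ queries on coin $j$. Because the ring is not an input to the chain hash, an oracle signature stays valid after replacing a member $C_i$ by $C_i+d$ with $\chi_i d=0$. Such a nonzero $d$ exists whenever $\hat\chi_i$ has a zero coordinate (heuristically with probability about $n/q$ per index), and the resulting mauled tuple wins your game. For the same reason $C_\pi$ need not agree across your two forks, so even your first extracted equation is unjustified. ``Unspent'' should mean coin $j$ was never queried to $\mathcal{O}^{\mathrm{sign}}$.

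\textbf{The missing idea.} Under that formalization, $\sn_j$ never enters $\mathcal{A}$'s view (only $C_j$ does). Any winning output is therefore a prediction of $\sn_j$ from $C_j$, and no extraction is needed. The reduction runs as follows.
\begin{itemize}
\item Take a challenge $(C,t)$ for Assumption~\ref{ass:link}.
\item Guess the target among the $Q$ honest coins and embed $C$ as that coin, programming $H_{\mathrm{ring}}$ with the challenge parameters. Simulate all other coins with known openings, and abort if $\mathcal{A}$ corrupts or signs with the guessed coin, since it then cannot win on it.
\item Output $1$ iff $\mathcal{A}$ returns $\sn^\ast=t$.
\end{itemize}
In the real case this happens with probability at least $\varepsilon/Q$. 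In the independent case, $t=a_3k'+e'$ is independent of the view and has min-entropy at least that of $e'$, so it happens with only negligible probability. This argument covers rings with adversarial members and never touches invertibility of $\bar\chi$. It carries only the random-oracle caveat for $e=H_{\mathrm{short}}(k)$ that the paper already states.

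Incidentally, your $(a_3,1)$ equation does not involve $C_\pi$, so it holds for every $\pi$, and the case split with its factor-$2$ guess was unnecessary.
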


\begin{proof}
A framing attempt requires the adversary to broadcast a verifying withdrawal carrying the serial number $\sn = a_3 k + e$ belonging to an honest, unspent coin, aligning with the non-frameability (non-slanderability) definitions formalized in \cite{goodell2019concise}. By Theorems~\ref{thm:unforgeability} and \ref{thm:linkability}, the existence of such a verifying tuple implies that the adversary implicitly knows a short (relaxed) opening of the honest coin. Because the coin is uncorrupted and its secret $(k, s, e)$ was never revealed, constructing this signature constitutes a successful forgery under the strict rules of $\mathsf{Exp}^{\mathrm{uf}}$. By Theorem~\ref{thm:unforgeability}, this allows a reduction to break Assumption~\ref{ass:sis} (Ring-SIS) or Assumption~\ref{ass:lwe} (Ring-LWE). Therefore, a PPT adversary has negligible probability of pre-spending an honest coin or maliciously publishing its nullifier to cause a future honest withdrawal to be rejected as a double-spend. \qed
\end{proof}

\begin{lemma}[Replay Resistance and Context Binding]
\label{lem:replay}
A verifying withdrawal cannot be replayed, and none of recipient, relayer, fee, or application identifier can be altered after signing, except with negligible probability.
\end{lemma}

\begin{proof}
The settlement context $\mu$ (comprising the recipient, relayer, fee, and application ID) prefixes the input to every Fiat--Shamir chain-update hash $\mathcal{H}$ \cite{fiat1986prove,bellare1993random}. Altering even a single bit of any component in $\mu$ after the signature has been generated cascades through the random oracle, completely randomizing the output challenges. By Lemma~\ref{lem:challenge}, the mutated challenge chain will fail to close back to $c_0$ except with negligible probability $\approx 2^{-256}$. Consequently, an adversary attempting to redirect the funds or alter the fee must forge an entirely new signature, which is computationally infeasible under Theorem~\ref{thm:unforgeability}.

Operationally, the smart contract tightly couples verification and payout by settling only to the specific $\mu$ values recorded in the chain-bound state box. This closes the gap between signature verification and payout construction, ensuring the AVM strictly enforces the cryptographically authorized intent. Finally, replay of a successfully executed withdrawal is prevented unconditionally by the flat $O(1)$ state check against the nullifier box, and the inclusion of the \texttt{appID} securely isolates the proof from cross-deployment replay attacks. \qed
\end{proof}

\begin{corollary}[Double-Spend Resistance of the Protocol]
\label{cor:doublespend}
Each deposit is withdrawable at most once: the contract asserts the absence of $\texttt{"n"} \Vert \snstar$ and creates it atomically with the payout, and by Theorem~\ref{thm:linkability} every spend of a given coin presents the identical $\snstar$. AVM group atomicity guarantees that the nullifier is recorded iff the chain closed and funds move iff the nullifier is recorded. Distinct coins map to distinct nullifiers except with negligible probability: if two coins had the same serial number, then $a_3 k_1 + e_1 = a_3 k_2 + e_2$, yielding $a_3(k_1 - k_2) + (e_1 - e_2) = 0$, which is a Ring-SIS collision for $(a_3, 1)$. Furthermore, $\NTT$ is a bijection, so distinct serial numbers give distinct packed images, and a nullifier collision would require a SHA-256 collision, negligible in the random-oracle model; hence one coin's withdrawal cannot pre-register or block another's.
\end{corollary}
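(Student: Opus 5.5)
The corollary combines the cryptographic guarantees already established with the on-chain semantics of Section~\ref{sec:onchain}. I would prove it in three parts.

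\textbf{Step 1: the same coin always yields the same nullifier.} Consider any two withdrawals that are accepted by the contract. An accepted withdrawal requires ring closure, so both tuples verify under $\Ver$. If both are attributed to the same coin $C_\pi$, Theorem~\ref{thm:linkability} shows that, except with negligible probability, each carries $\sn = a_3 k + e$ for the unique short opening $(k,s,e)$ of $C_\pi$. The contract recomputes $\snstar = \mathrm{SHA\text{-}256}(\pack(\NTT(\sn)))$ from the transported serial number rather than trusting a submitted value. Hence both withdrawals present the identical $\snstar$.

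\textbf{Step 2: atomicity forbids a second spend.} The \textsf{withdraw} group asserts that the box $\texttt{"n"}\Vert\snstar$ is absent, creates that box, and issues the payouts. Because an AVM group is atomic, any failing assertion reverts the entire group. Therefore the nullifier is recorded iff closure held, and funds move iff the nullifier is recorded. The second withdrawal with the same $\snstar$ finds the box already present and reverts, so each deposit is withdrawable at most once.

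\textbf{Step 3: distinct coins do not collide.} Suppose distinct coins $(k_1,e_1) \neq (k_2,e_2)$ have equal serial numbers. Then
\[
a_3(k_1-k_2) + 1\cdot(e_1-e_2) = 0,
\]
which is a short nonzero Ring-SIS solution for the pair $(a_3,1)$. The vector is nonzero: if $k_1 = k_2$, the equation forces $e_1 = e_2$, contradicting distinctness. For honest coins this case is also ruled out directly, since $k_1 = k_2$ occurs only with negligible probability over the centred binomial sampling.

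I expect the main obstacle in this step to be that $(a_3,1)$ is not a uniform Ring-SIS instance. I would handle it by viewing it as the Hermite normal form of a uniform pair $(a_3', a_3'')$: with $a_3 = (a_3'')^{-1}a_3'$, a solution for $(a_3,1)$ gives one for $(a_3', a_3'')$. Equivalently, one can argue via search Ring-LWE uniqueness, appealing to Assumption~\ref{ass:sis} as applied elsewhere in the paper.

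To finish, $\NTT$ is a bijection, so distinct serial numbers give distinct packed images. A collision in $\snstar$ would then be a SHA-256 collision, which in the ROM has probability $O(Q^2 2^{-256})$ by Lemma~\ref{lem:challenge}. Consequently one coin's withdrawal can neither pre-register nor block another coin's nullifier.
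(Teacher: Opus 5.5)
Your proposal is correct and follows the paper's argument essentially step for step. Theorem~\ref{thm:linkability} fixes $\snstar$ per coin; atomicity of the \textsf{withdraw} group ties the nullifier check-and-create to the payout; equal serial numbers of distinct coins yield the short nonzero relation $a_3(k_1-k_2)+(e_1-e_2)=0$; and the NTT bijection plus SHA-256 collision resistance carry this over to $\snstar$. Your extra remarks only make explicit what the paper leaves implicit: the nonzero check, the restriction to coins with distinct $(k,e)$, and the Hermite-normal-form reduction showing $(a_3,1)$ is a legitimate Ring-SIS instance.
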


\begin{remark}[Abstraction Gaps and Scope of the Proofs]
\label{rem:gaps}
We state explicitly where the arguments are conditional, complementing Section~\ref{sec:limitations}.
\begin{itemize}
    \item \textbf{Relaxed extraction and soundness gap.} The extractor of Theorem~\ref{thm:unforgeability} recovers a \emph{relaxed} opening with relaxation factor $\bar{\chi}$, not an exact opening, and we do not rely on $\bar{\chi}$ being invertible; in the fully splitting ring ($X^{n}+1$ factors into $n$ linear terms modulo $q = 12289$, since $2n \mid q-1$) short challenge differences are not guaranteed invertible, so techniques that assume invertibility do not apply. This is a direct consequence of the ring choice: MatRiCT \cite{esgin2019matrict} and DualRing-LB \cite{yuen2021dualring} obtain exact (relaxed) openings and full balance/soundness against rings containing adversarial members precisely because they work over \emph{partially} splitting rings in which short challenge differences are invertible \cite{lyubashevsky2018short}, whereas we deliberately adopt the fully splitting Falcon-512 ring so that the on-chain verifier can evaluate every relation through a native negacyclic NTT (Section~\ref{sec:onchain}). The reduction we give avoids inversion by subtracting the honest opening, which is sound for theft resistance and non-frameability but does not upgrade to an exact-opening or knowledge-soundness statement. For the same reason we state $\mathsf{Exp}^{\mathrm{uf}}$ (Definition~\ref{def:uf}) over an all-honest ring: extending theft resistance to a balance-style notion over rings that also contain adversarially generated decoys (as in the MatRiCT model \cite{esgin2019matrict}) is standard in principle but requires the invertible-challenge-difference machinery, which the fully splitting ring does not provide; we leave a dedicated relaxed-soundness analysis for this ring as future work.
    \item \textbf{Deterministic serial-number error (self-correlated noise).} The blinding is $e = H_{\mathrm{short}}(k)$, a \emph{deterministic} function of $k$, so $\sn = a_3 k + e$ is not a standard Ring-LWE sample with an independent error but a \emph{self-correlated-noise} instance. The anonymity and unlinkability results (Theorem~\ref{thm:anonymity}, Corollary~\ref{cor:unlink}) are therefore stated in the random-oracle model: with $H_{\mathrm{short}}$ a random oracle, $e$ acts as an independent short error unless the adversary queries the oracle at the secret $k$, and that event reduces to key recovery (search Ring-LWE, equivalently Assumption~\ref{ass:sis} on the coin). We do \emph{not} claim a black-box reduction to standard decisional Ring-LWE; recovering $k$ from $\sn$ is search Ring-LWE (bounded-distance decoding of $a_3 k + e = \sn$ with short $(k,e)$). A clean standard-model reduction is recovered by the alternative design that samples $e$ \emph{independently} at coin generation and stores it in the coin secret (at the cost of a larger persisted opening); we adopt the hash-derived $e$ because it allows deterministic recovery from a minimal persisted seed and inherits determinism from commitment binding rather than from the hash.
    \item \textbf{Rejection sampling with a hash-derived secret.} Although $k$ and $e = H_{\mathrm{short}}(k)$ are not independent, the Fiat--Shamir-with-aborts zero-knowledge argument is unaffected: the honest-verifier simulator samples every response $z_{k,i}, z_{s,i}, z_{e,i}$ from $D_\varsigma$ and never evaluates $H_{\mathrm{short}}$, so the accepted-response distribution is independent of the secret \emph{regardless} of correlations among $(k,s,e)$; the correlation enters only the acceptance probability (termination), where $e$, short by the random-oracle modelling of $H_{\mathrm{short}}$, is covered by the same norm and $M$-bound analysis as $k$ and $s$. Consequently the joint ``double-constrained'' test ($z_{k,\pi} - \chi_\pi k^\ast$ and $z_{e,\pi} - \chi_\pi H_{\mathrm{short}}(k^\ast)$ both short) conveys no advantage against the idealized full-ratio sampler; against the norm-bound-only sampler actually implemented it inherits only the residual leakage already noted under the sampler simplification, negligible for the single spend a coin admits at $\varsigma = 2500$.
    \item \textbf{Sampler simplification.} The implemented sampler applies a norm-bound-only acceptance test on rounded-Gaussian masks, and does not condition the decoy responses on the bound, rather than the full Fiat--Shamir-with-aborts acceptance ratio ($M \approx 3$) assumed by the zero-knowledge step of Theorem~\ref{thm:anonymity}. The zero-knowledge (hence anonymity) guarantee is therefore exact only for the idealized full-ratio sampler; the correctness statement (Theorem~\ref{thm:correctness}) is correspondingly ``with overwhelming probability'' rather than perfect.
    \item \textbf{Concrete parameters.} The parameter set $(\eta, \varsigma, \beta, w)$ has not been calibrated against lattice-reduction estimates \cite{albrecht2015concrete}, so no numeric bit-security level is claimed; only the ring $(n,q)$ is inherited from Falcon-512.
    \item \textbf{Model.} All reductions are classical-ROM; the deployed post-quantum setting warrants a QROM analysis \cite{boneh2011random,don2019security} of the composed scheme, which we leave open.
\end{itemize}
The structural results are the most robust: uniqueness of the serial number (Theorem~\ref{thm:linkability}), double-spend resistance (Corollary~\ref{cor:doublespend}), and context binding (Lemma~\ref{lem:replay}) hold as stated, relying only on Assumption~\ref{ass:sis} and the random oracle. The quantitative unforgeability and anonymity guarantees are conditional on the items above.
\end{remark}

\bibliographystyle{splncs04}
\bibliography{ref}

\end{document}